\newcommand*{\QUANTUM}{}%
\theoremstyle{plain}
\newtheorem{theorem}{Theorem}
\newtheorem{lemma}[theorem]{Lemma}
\newtheorem{prop}[theorem]{Proposition}
\newtheorem{cor}[theorem]{Corollary}
\theoremstyle{definition}
\newtheorem{definition}{Definition}
\theoremstyle{definition}
\theoremstyle{remark}
\newtheorem{remark}{Remark}
\DeclareMathOperator{\polylog}{polylog}
\DeclareMathOperator{\poly}{poly}
\newcommand{\abs}[1]{\left|#1\right|}
\newcommand{\norm}[1]{\lVert#1\rVert}
\newcommand{\grad}{\nabla}
\newcommand{\z}{{\bm{\xi}}}
\newcommand{\x}{{\bm{x}}}
\newcommand{\R}{\mathbb{R}}
\newcommand{\q}{{\bm{q}}}
\newcommand{\xp}{\frac{\x}{P}}
\newcommand{\xkp}{\frac{x_k}{P}}
\renewcommand{\a}{{\breve{a}}}
\newcommand{\ba}{{\breve{\alpha}}}
\newcommand{\bb}{{\breve{\beta}}}
\renewcommand{\t}{\tau}
\newcommand{\Z}{\Xi}
\newcommand{\diag}{\mathrm{diag}~}
\newcommand{\half}{\frac{1}{2}}
\newcommand{\mo}{\mathcal{O}}
\newcommand{\de}[1]{\deg_a\left(#1\right)}
\newcommand{\dg}[3]{\eta_{#1}\left(#2, #3\right)}
\newcommand{\eps}{\epsilon}
\begin{document}

\title{On efficient quantum block encoding of pseudo-differential operators}

\author{Haoya Li}
\affiliation{Department of Mathematics, Stanford University, Stanford, CA 94305} 
\orcid{0000-0001-7076-7600}
\email{lihaoya@stanford.edu}

\author{Hongkang Ni} 
\affiliation{Institute for Computational and Mathematical Engineering, Stanford University,  Stanford, CA 94305} 
\orcid{0000-0002-7507-4755}
\email{hongkang@stanford.edu}

\author{Lexing Ying} 
\affiliation{Department of Mathematics, Stanford University, Stanford, CA 94305}
\affiliation{Institute for Computational and Mathematical Engineering, Stanford University,  Stanford, CA 94305} 
\orcid{0000-0003-1547-1457}
\email{lexing@stanford.edu}

\begin{abstract}

Block encoding lies at the core of many existing quantum algorithms. Meanwhile, efficient and explicit block encodings of dense operators are commonly acknowledged as a challenging problem. This paper presents a comprehensive study of the block encoding of a rich family of dense operators: the pseudo-differential operators (PDOs). First, a block encoding scheme for generic PDOs is developed. Then we propose a more efficient scheme for PDOs with a separable structure. Finally, we demonstrate an explicit and efficient block encoding algorithm for PDOs with a dimension-wise fully separable structure. Complexity analysis is provided for all block encoding algorithms presented. The application of theoretical results is illustrated with worked examples, including the representation of variable coefficient elliptic operators and the computation of the inverse of elliptic operators without invoking quantum linear system algorithms (QLSAs). 
\end{abstract}

\maketitle

\section{Introduction}\label{sec:intro}

Block encoding \cite{low2017optimal} is a widely used technique in quantum computing and a crucial component of many quantum algorithms with a potentially exponential advantage over classical algorithms, such as quantum phase estimation (QPE) \cite{kitaev2002classical, nielsen2002quantum}, the HHL algorithm \cite{harrow2009quantum}, quantum singular value transformation (QSVT) \cite{gilyen2019quantum} and various quantum linear system solvers \cite{lin2020optimal, an2022quantum, costa2022optimal}, to name a few. The idea of block encoding is to embed a linear operator $A$ into a unitary operator $U_A$ with larger dimensions after appropriate scaling. The unitary $U_A$ is then converted into a quantum circuit, allowing a quantum computer to access $U_A$ for actual computations. 

The potential advantage of quantum algorithms depends critically on efficient and practical quantum circuits for block-encoding of the operators involved, and the construction of such circuits can be non-trivial in general. Researchers have constructed block encoding schemes leveraging different structures of the operators studied. 
For example, a block encoding scheme is provided in \cite{camps2022explicit, gilyen2019quantum} for sparse matrices, and a recipe is presented for hierarchical matrices in \cite{nguyen2022quantum}. 
In this paper, we consider the problem of block encoding a large family of dense operators: the pseudo-differential operators (PDOs). PDO is a rich family of linear operators that include many commonly used examples in scientific problems, which is typically given in the following form:
\begin{equation}\label{eq:pdo}
    A f(\x) =\int_{\mathbb{R}^d} e^{2 \pi i \x \cdot \z} a(\x, \z) \widehat{f}(\z) d \z, 
\end{equation}
where $a(\x, \z)\in C^\infty(\mathbb{R}^d\times\mathbb{R}^d)$ is called the symbol of $A$ and $\widehat{f}$ is the Fourier transform of $f$. A major motivation for studying operators with the form \eqref{eq:pdo} is that differential operators often enjoy a simple representation in the Fourier domain. For example, the elliptic operator:
\begin{equation}\label{eq:ellip}
A = I - \grad\cdot(\omega(\x)\grad)
\end{equation}
with $\omega(\x)>0$ can be represented in the form of \eqref{eq:pdo} with the symbol 
\begin{equation}\label{eq:ellipsym}
a(\x, \z) = 1-2\pi i\grad\omega(\x)\cdot\z+4\pi^2\omega(\x)|\z|^2. 
\end{equation}
More generally, an $m$-th order linear partial differential operator $P(\x, D) = \sum_{|\alpha|\leq m} a_\alpha(\x)D^\alpha$ with $D = -\frac{i}{2\pi}\grad_\x$ can be represented by
\[
Pf(\x) = \int_{\mathbb{R}^d} e^{2 \pi i \x \cdot \z} \sum_{|\alpha|\leq m} a_\alpha(\x) \z^\alpha \widehat{f}(\z) d \z = \int_{\mathbb{R}^d} e^{2 \pi i \x \cdot \z} P(\x, \z) \widehat{f}(\z) d \z,
\]
where $\alpha=(\alpha_1, \ldots, \alpha_d)$ is the $d$-dimensional multi-index and $|\alpha|=\sum_{j=1}^d\alpha_j$. 
Another popular example is the translation-invariant operator. Let $\varphi_\z(\x) = e^{2\pi i\x\cdot\z}$ be a function of $\x$. If an operator $A$ is translation-invariant, i.e., $(A\varphi_\z)(\x)=a(\z)\varphi_\z(\x)$, then 
\begin{equation}\label{eq:multiplier}
Af(\x) = \int_{\mathbb{R}^d} e^{2 \pi i \x \cdot \z}a(\z)\widehat{f}(\z) d \z,
\end{equation}
in which case we say that the symbol $a(\z)$ is a multiplier. Apart from the examples mentioned above, the PDO family also contains other operators such as convolution operators, singular integral operators, etc. Moreover, a space of PDOs is often closed with respect to many elementary operations under certain conditions. For example, for the operator $A$ in \eqref{eq:multiplier} with symbol $a(\z)\not=0$, the inverse of $A$ can be simply represented by
\[
A^{-1}f(\x)= \int_{\mathbb{R}^d} e^{2 \pi i \x \cdot \z} \frac{1}{a(\z)}\widehat{f}(\z) d \z.
\]
In general, the operator defined by $C^\infty$ function $a(\x,\z)$ as in \eqref{eq:pdo} is called a pseudo-differential operator only if $a(\x,\z)$ satisfies some additional requirements such as
\[
|\partial_\x^\alpha\partial_\z^\beta a(\x, \z)|\leq C_{\alpha\beta}\langle\z\rangle^{m-\alpha-\beta},
\]
where $\langle\z\rangle:=\sqrt{1+|\z|^2}$, and the space of the corresponding PDOs is denoted by $S^m$. There are multiple monographs on PDOs that interested readers can refer to, such as \cite{stein1993harmonic, wong1991anitro}. 

The PDOs considered in this paper are equipped with a periodic boundary condition on the space domain $\Omega = [0, 1]^d$. 
The frequency variable $\z$ thus takes the value on the integer grid, and the operator $A$ becomes
\begin{equation}\label{eq:pdoZZ}
    A f(\x) = \sum_{\z \in \mathbb{Z}^d} e^{2 \pi i \x \cdot \z} a(\x,\z) \hat{f}(\z),
\end{equation}
where $\hat{f}$ is the coefficient of the Fourier series of $f$. In this paper, we derive block encoding schemes for the PDO \eqref{eq:pdoZZ} based on different additional structures of the symbol $a(\x, \z)$.
First, we present a block encoding scheme for generic symbols $a(\x, \z)$ without additional structures. We then point out that the success probability of the quantum circuit can be significantly improved if the symbol $a(\x, \z)$ can be expanded into series:
\begin{equation}
    a(\x, \z) = \sum_j \alpha_j(\x)\beta_j(\z), 
\end{equation}
with only $\mo(1)$ terms. Furthermore, the circuit can be constructed in a much more explicit way with the help of quantum signal processing (QSP) and quantum eigenvalue transformation (QET) if the symbol is a sum of \emph{fully separable} terms, i.e., it can be expanded as
\begin{equation}\label{eq:fullysepex}
    a(\x, \z) = \sum_j \alpha_{j1}(x_1)\cdots\alpha_{jd}(x_d)\beta_{j1}(\xi_1)\cdots\beta_{jd}(\xi_d), 
\end{equation}
with $\mo(1)$ terms, where $\x=(x_1,\ldots,x_d)$ and $\z=(\xi_1,\ldots,\xi_d)$. See \Cref{def:fullysep} and \Cref{sec:fully sep} for details. Complexity analysis is included for all block encoding schemes, and their applications are showcased with specific examples. The contributions of this paper can be summarized as follows:
\begin{itemize}
    \item We provide practical block encoding schemes for pseudo-differential operators, including algorithms applicable to generic PDOs (see \Cref{fig:PDO1}), efficient block encoding for separable PDOs (see \Cref{fig:PDO2}) and explicit circuits for fully separable PDOs (see \Cref{fig:PDO3}). Novel ideas of circuit design, such as the phase multiplication circuit (see \Cref{fig:lemma2}) and the prototype for diagonal multiplication (see \Cref{fig:D}), are included in the block encoding schemes.
    \item We conduct comprehensive complexity analysis for the block encoding schemes proposed. The result for complexity analysis includes the success probability, the number of ancilla qubits needed, and the number of gates used. In addition to theorems applicable to general cases, we also demonstrate possibilities of improving the complexity results by leveraging particular structures of the problem (see \Cref{subsec:varellip} for example).
    \item We demonstrate the usage of our results with explicit examples. One can use our block-encoding scheme not only as an integrated part of established quantum algorithms but also as an option for conducting operations directly on certain operators. For the example shown in \Cref{subsec:appinverse}, we use the idea of symbol calculus to directly block-encode the inverse of an elliptic operator and the dependence of the complexity on $P$ (the number of discretization points used for each dimension) is at least quadratically improved compared to previous results for block-encoding the inverse matrix (see \Cref{rem:inv}). 
\end{itemize}

\subsection{Contents}
The paper is organized as follows. In \Cref{sec:prelim}, we specify the notation used in this paper and provide preliminary results needed in subsequent sections, such as quantum Fourier transform (QFT), the linear combination of unitaries (LCU), quantum signal processing (QSP) and quantum eigenvalue transformation (QET). In \Cref{sec:arithmD}, an algorithm is given for block encoding of generic symbols. For a separable symbol $a(\x,\z)= \alpha(\x)\beta(\z)$, a more efficient block encoding scheme is provided in \Cref{sec:separable}. Then a more explicit and practically feasible block encoding is developed in \Cref{sec:fully sep} for fully separable symbols of the form displayed in \eqref{eq:fullysepex}. Finally, \Cref{sec:app} presents the application of the proposed block encoding method to two types of widely used PDOs, including a variable coefficient second-order elliptic operator and the inverse of a constant coefficient elliptic operator. The paper is ended with a conclusion and discussion for future directions in \Cref{sec:con}. 

\subsection{Related works}

\subsubsection{Block encoding}
Most of the previous work \cite{childs2017quantum, berry2015simulating, low2017optimal} assumes that we have access to a matrix by querying two oracles that encode the locations and values of the non-zero elements of the objective matrix. Among them, \cite{gilyen2019quantum}*{Lemma 48} provides a general framework to explicitly construct the block encoding of sparse matrices if we are given these two oracles. Following this routine, \cite{camps2022explicit} constructs the block encoding of banded circulant matrices, extended binary tree matrices, and quantum walk operators. 

For general non-sparse matrices, it is clearly impossible to block-encode them in logarithmic time, and \cite{camps2022fable} proposes a near-optimal scheme for block encoding general unstructured matrices. Many methods are also proposed to implement the block encoding for full-rank dense matrices with certain structures, such as Toeplitz and Hankel systems \cite{mahasinghe2016efficient}, and linear group convolutions \cite{castelazo2022quantum} based on quantum Fourier transforms. The authors of \cite{nguyen2022quantum} introduce a new  method for kernel matrices with a hierarchical structure, which can be applied to non-uniform grids the Fourier transform cannot be used.

\subsubsection{Quantum PDE solvers} 
Along with the development of quantum linear system solvers \cite{harrow2009quantum, childs2017quantum, gilyen2019quantum, lin2020optimal, costa2022optimal}, many quantum PDE solvers are proposed to take advantage of exponential acceleration. Quantum counterparts of the finite element method (FEM) \cite{montanaro2016quantum} and the finite difference method (FDM) \cite{cao2013quantum, costa2019quantum} emerged for solving Poisson's equation and wave equation. In \cite{childs2021high}, adaptive finite difference and spectral methods are proposed to improve the dependence of the complexity on the error $\epsilon$ from $\mo(\poly(1 / \epsilon))$ to $\mo(\polylog (1 / \epsilon))$. It is worth noting that the process of block encoding the discretized differential operator is often not provided in these works, and constructing the block encoding for generic partial differential operators is highly non-trivial. 

\subsubsection{Numerical algorithms for PDOs}
There are also various classical numerical algorithms that compute PDOs efficiently. For example, \cite{demanet2011discrete} exploits the following expansion of symbols:
\[
a(\x,\z) \approx \sum_{j} \alpha_j(\x) \beta_j(\z).
\]
The paper presents efficient numerical approximations of $\beta_j(\z)$ with Chebyshev polynomials and hierarchical splines and further reduces the number of terms in the expansion by SVD or QR decomposition. However, a naive extension to high-dimensional PDOs leads to exponential overhead, as is the case for most classical methods. This is also one of the reasons why a quantum implementation of PDOs can be potentially useful.

\section{Preliminaries and notations}\label{sec:prelim}
\subsection{Notations}\label{subsec:note}

We adopt the commonly used notation for binary numbers: for an integer power of two $P=2^p$ and any $y\in\{0,\ldots,P-1\}$ if $y = y_0 + 2y_1 + \cdots+2^{p-1}y_{p-1} = (y_{p-1}y_{p-2}\cdots y_0.)$ in the binary system, the corresponding quantum state is $\ket{y} \equiv \ket{y_{p-1}\ldots y_0}$.

This extends to an $m$-tuple $\x = (x_1, \ldots, x_m)$ with $x_j\in\{0,\ldots,P-1\}$. The corresponding quantum state is given by $\ket{x_m}\cdots\ket{x_1}$, where $\ket{x_j} = \ket{x_{j,p-1} \cdots x_{j,0}}$ for each $j$. For a multivariate function $g:\{0,\ldots, P-1\}^m\rightarrow\R$, we denote by $D_{g}$ the diagonal multiplication operator on the Hilbert space $\mathbb{C}^{mp}$:
\begin{equation}\label{eq:diagg}
  \ket{x_{m}}\cdots\ket{x_{1}} \rightarrow g(x_{1}, \ldots, x_{m})\ket{x_{m}}\cdots\ket{x_{1}}.
\end{equation}
For a vector $\bm{v} = (v_1, v_2, \ldots, v_m)$, we denote by $\diag(\bm{v})$ the diagonal matrix with diagonal elements $(v_1,v_2, \ldots, v_m)$. 

The notation $|\bm{v}|$ for a $d$-dimensional vector $\bm{v}$ stands for the Euclidean norm $\sqrt{\sum_{j=1}^d|v_j|^2}$, where $v_j$ is the $j$-th coordinate of $\bm{v}$.

We also use the single qubit rotations $R_y(\theta)=e^{-i\frac{\theta}{2} Y} = \begin{bmatrix}\cos\frac{\theta}{2}&-\sin(\frac{\theta}{2})\\\sin\frac{\theta}{2}&\cos\frac{\theta}{2}\end{bmatrix}$ and $R_z(\theta)=e^{i\frac{\theta}{2}}e^{-i\frac{\theta}{2} Z} = \begin{bmatrix}1&\\&e^{i\theta}\end{bmatrix}$, where $Y$ and $Z$ are the Pauli matrices $Y=\begin{bmatrix}0&-i\\i&0\end{bmatrix}$ and $Z = \begin{bmatrix}1&\\&-1\end{bmatrix}$. The phase gate $S=\sqrt{Z}$ denotes the matrix $S = \begin{bmatrix}1&\\&i\end{bmatrix}$. To simplify the discussion, we assume access to all single qubit rotations, the Hadamard gate, the CNOT gate, the $2$-qubit SWAP gate, and the Toffoli gate when counting the number of elementary gates used. If one wants to use certain commonly used universal gate sets such as Hadamard and Toffoli, there will be some overhead linear in the number of gates involved and ploy-logarithm in the precision $\eps$, as bounded by the famous Solovay–Kitaev theorem \cite{kitaev1997quantum}. There are also many established results on decomposing commonly seen quantum gates with a certain universal gate set, such as \cite{rieffel2011quantum, da2022linear, vale2023decomposition}, to name a few. 

An $(m+n)$-qubit unitary operator $U$ is called a $(\gamma, m, \epsilon)$-block-encoding of an $n$-qubit operator $A$, if
$$
\left\|A-\gamma\left(\left\langle 0^m\right| \otimes I_n\right) U\left(\left|0^m\right\rangle \otimes I_n\right)\right\| \leq \epsilon,
$$
where $I_n$ denotes the $n$-qubit identity operator.
In the matrix form, a $(\gamma, m, \epsilon)$-block-encoding is a $2^{m+n}$ dimensional unitary matrix
$$
U=\left(\begin{array}{cc}
\widetilde{A} / \gamma & * \\
* & *
\end{array}\right)
$$
where $*$ can be any block matrices of the correct size and $\|\widetilde{A}-A\| \leq \epsilon$. In addition, when $A$ is a Hermitian matrix, it is possible to construct $U_A$ such that it is also Hermitian, in which case it is called a $(\gamma, m, \epsilon)$-Hermitian-block-encoding of $A$. The error $\epsilon$ is omitted in the notation of block encodings if $\epsilon=0$.

For an $n$-qubit system, the quantum Fourier transform (QFT) is an implementation of 
\begin{equation}\label{eq:QFT}
U_{\text{FT}}\ket{j} = \frac{1}{\sqrt{N}}\sum_{k=0}^{N-1} e^{2\pi i \frac{kj}{N}}\ket{k},
\end{equation}
where $N=2^n$, using a circuit $U_{\text{FT}}$ with $\mo(n^2)$ elementary gates and no ancilla qubit. The elementary gates involved include $2$-qubit swap gates and $2$-qubit controlled rotation gates. We refer the readers to \cite{coppersmith2002approximate, nielsen2002quantum} for more details on QFT. If only an approximation of $U_{\text{FT}}$ is needed, one can use approximated QFT \cite{nam2020approximate}, which has gate complexity $\mo(n\log(n/\eps))$, where $\eps$ is the spectral norm error of the approximation. 

\subsection{Linear combination of unitaries (LCU)}\label{subsec:LCU}
Given a few block-encoded matrices, a block encoding of a certain linear combination of them is often needed in practice. To this end, the linear combination of unitaries (LCU) technique has been developed (\cite{berry2015simulating, childs2017quantum, gilyen2019quantum}). For example, for two matrices $A$ and $B$, a block encoding of $A+B$ can be given by the circuit in \Cref{fig:LCU}, where $U_A$ and $U_B$ are block encodings of $A$ and $B$, respectively. 

\begin{figure}[ht]
\centerline{\Qcircuit @C=1em @R=1.5em {
\lstick{\ket{0}} & \gate{H} & \ctrlo{1} & \ctrl{1} & \gate{H}&\qw \\
\lstick{\ket{\psi}}& \qw & \gate{U_A} & \gate{U_B} & \qw& \qw \\
}
}    
\caption{LCU for two unitaries} \label{fig:LCU}                 
\end{figure}

For general linear combinations, we recall the following result from \cite{gilyen2019quantum} for general linear combinations. 
\begin{lemma}\label{lem:LCU}
For a vector $y\in\mathbb{C}^m$ with $\|y\|_1\leq\beta$, assume we have
\begin{enumerate}
    \item a pair of unitaries $(U_L, U_R)$ with shape $2^b\times2^b$ such that $U_L\ket{0^b} = \sum_{j=0}^{2^b-1}c_j\ket{j}$, $U_R\ket{0^b} = \sum_{j=0}^{2^b-1}d_j\ket{j}$, $\sum_{j=0}^{m-1}|\beta c_j^*d_j - y_j|<\epsilon_1$ and $\sum_{j=m}^{2^b-1}|c_j^*d_j|=0$, and
    \item a unitary $W=\sum_{j=0}^{m-1}\ket{j}\bra{j}\otimes U_j+\sum_{j=m}^{2^b-1}\ket{j}\bra{j}\otimes I_{a+s}$ where each $U_j$ is an $(\alpha, a, \epsilon_2)$-block-encoding of $A_j$ for $j=0, 1, \ldots, m-1$, 
\end{enumerate}
then $(U_L^\dag\otimes I_{a+s})W(U_R\otimes I_{a+s})$ is an $(\alpha\beta, a+b, \alpha\epsilon_1+\beta\epsilon_2)$-block-encoding of $\sum_{j=0}^{m-1}y_jA_j$, where $I_{a+s}$ denotes the identity operator with size $2^{a+s}\times2^{a+s}$. 
\end{lemma}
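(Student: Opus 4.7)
The plan is to verify directly that the composite unitary $V := (U_L^\dag \otimes I_{a+s}) W (U_R \otimes I_{a+s})$ has $\frac{1}{\alpha\beta}$ times the desired operator, up to the claimed error, in its $(0^{a+b}, 0^{a+b})$ block. The ancilla register is split into an LCU-control register of $b$ qubits and a block-encoding-ancilla register of $a$ qubits, with the data register of $s$ qubits trailing; I would fix this ordering throughout and write $\langle 0^{a+b}| = \langle 0^b| \otimes \langle 0^a|$ so that the $U_L^\dag, U_R$ factors only touch the $b$-qubit register.

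First I would compute the projection
$(\langle 0^b| \otimes \langle 0^a| \otimes I_s)\, V\, (|0^b\rangle \otimes |0^a\rangle \otimes I_s)$
by sliding $U_R$ past the trivial factor to produce $|\psi_R\rangle := U_R|0^b\rangle = \sum_j d_j|j\rangle$ on the LCU register, and similarly $\langle\psi_L| := \langle 0^b| U_L^\dag = \sum_j c_j^* \langle j|$. Substituting the definition of $W$ then yields
$\sum_{j=0}^{m-1} c_j^* d_j\, (\langle 0^a|\otimes I_s) U_j (|0^a\rangle\otimes I_s) \;+\; \sum_{j=m}^{2^b-1} c_j^* d_j\, I_s.$
The second sum vanishes termwise by the assumption $\sum_{j=m}^{2^b-1}|c_j^* d_j|=0$, while the first sum simplifies via the block-encoding property of $U_j$ to $\frac{1}{\alpha}\sum_{j=0}^{m-1} c_j^* d_j \widetilde{A}_j$, where $\|\widetilde{A}_j - A_j\| \leq \epsilon_2$.

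Multiplying by $\alpha\beta$ gives the candidate approximation $\widetilde{A} := \beta \sum_{j=0}^{m-1} c_j^* d_j \widetilde{A}_j$. A single triangle inequality split then gives
$\Bigl\|\widetilde{A} - \sum_j y_j A_j\Bigr\| \;\leq\; \sum_j |\beta c_j^* d_j - y_j|\,\|\widetilde{A}_j\| \;+\; \sum_j |y_j|\,\|\widetilde{A}_j - A_j\|,$
and one closes using $\|\widetilde{A}_j\|\leq \alpha$ (since $\widetilde{A}_j/\alpha$ is a submatrix of a unitary) on the first term and $\|y\|_1 \leq \beta$ on the second, yielding the bound $\alpha\epsilon_1 + \beta\epsilon_2$.

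There is no deep obstacle: the proof is essentially bookkeeping. The only thing requiring care is keeping the three registers (LCU control, block-encoding ancilla, data) in the right order when expanding $W$, and ensuring both error sources are budgeted correctly so they combine into exactly $\alpha\epsilon_1 + \beta\epsilon_2$ rather than a larger product. I would also note that the identity piece of $W$ (for indices $j\geq m$) is included solely to make $W$ a well-defined unitary on the full Hilbert space; its contribution is killed by the orthogonality hypothesis and thus plays no role in the final operator.
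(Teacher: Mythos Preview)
Your proof is correct and is the standard direct computation. Note, however, that the paper does not actually prove this lemma: it is stated as a result adapted from Gily\'en et al.\ (2019) and is left without proof in the body of the paper, so there is no ``paper's own proof'' to compare against. Your register-ordering bookkeeping and the two-term triangle-inequality split (using $\|\widetilde{A}_j\|\leq\alpha$ on the coefficient-error term and $\|y\|_1\leq\beta$ on the block-encoding-error term) are exactly the expected argument.
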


\subsection{Quantum eigenvalue transformation and quantum signal processing}\label{subsec:QET}
Given a Hermitian block encoding of Hermitian matrix $A$, one can construct a block encoding of $f(A)$ for a certain function $f$ using the qWeET) technique \cite{low2017optimal,gilyen2019quantum}. Let $f^e(\x)=\frac{1}{2}(f(\x)+f(-\x))$ and $f^o(\x)=\frac{1}{2}(f(\x)-f(-\x))$ be the even and odd part of $f(\x)$, respectively. The standard procedure consists of the following steps, where we assume that $f(\x)$ is properly scaled such that $\|f^e\|<1$, $\|f^o\|<1$, and $\|\cdot\|$ denotes the $L^\infty$ norm on $[-1,1]$.

\begin{enumerate}
    \item Approximate $f^e$ and $f^o$ with degree $\deg_{f^e}(\epsilon)$ even polynomial $\tilde{f}^e$ and degree $\deg_{f^o}(\epsilon)$ odd polynomial $\tilde{f}^o$, respectively, such that $\|\tilde{f}^e-f^e\|+\|\tilde{f}^o-f^o\|<\epsilon$ and $\|\tilde{f}^e\|\leq1$, $\|\tilde{f}^o\|\leq1$. 
    \item Find two sequences of phase factors $(\phi_0^e, \ldots, \phi_{\deg_{f^e}(\epsilon)}^e), (\phi_0^o, \ldots, \phi_{\deg_{f^o}(\epsilon)}^o)$ with each element in$[-\pi, \pi]$ using quantum signal processing (QSP) such that $\tilde{f}^e(\x)=\mathrm{Re}(p^e(\x))$, $\tilde{f}^o(\x)=\mathrm{Re}(p^o(\x))$, where $p^e$ and $p^o$ are complex polynomials with degree $\deg_{f^e}(\epsilon)$ and $\deg_{f^o}(\epsilon)$, respectively, given by
    \begin{equation}\label{eq:QSP}
    \begin{bmatrix}
        p(\x)&r(\x)\\r^*(\x)&p^*(\x)
    \end{bmatrix}=e^{i\phi_0Z}e^{i\arccos x X}e^{i\phi_1Z}e^{i\arccos x X}\cdots e^{i\phi_{\deg_f(\epsilon)-1}Z}e^{i\arccos x X} e^{i\phi_{\deg_f(\epsilon)}Z}.
    \end{equation}
    Here, the superscripts $e$ and $o$ are omitted for simplicity. The phase factors are then used in the QET circuit shown in \Cref{fig:QET} to construct block encodings $U_{f^e(A)}$ and $U_{f^o(A)}$, where the controlled rotation gate $\text{CR}_\phi$ is described in \Cref{fig:cr}
    \item Combine the block encodings $U_{f^e(A)}$ and $U_{f^o(A)}$ with the LCU circuit in \Cref{fig:LCU} to form the block encoding $U_{f(A)}$
\end{enumerate}

\begin{figure}[ht]
  \centering
   \subfigure[$\text{CR}_{\phi}$\label{fig:cr}.]{
    \Qcircuit @C=0.4em @R=1.5em {
  & \targ & \gate{e^{-i \phi Z}} & \targ    & \qw \\
& \ctrlo{-1} & \qw   & \ctrlo{-1} & \qw\\
} }\hspace{3em}
\subfigure[Quantum eigenvalue transformation.\label{fig:QET}]{
\Qcircuit @C=0.35em @R=1.2em {
\lstick{\ket{0}} &\gate{H}& \multigate{1}{\text{CR}_{{\phi}_{\t}}} & \qw & \multigate{1}{\text{CR}_{{\phi}_{\t-1}}} & \qw & \qw &&&\cdots&&&&\qw& \multigate{1}{\text{CR}_{{\phi}_0}}&\gate{H}&\qw\\
\lstick{\ket{0^m}} &\qw&\ghost{\text{CR}_{{\phi}_d}}& \multigate{1}{U_A} &  \ghost{\text{CR}_{{\phi}_{d-1}}}  & \multigate{1}{U_A} &\qw&&&\cdots &&&&\multigate{1}{U_A}&\ghost{\text{CR}_{{\phi}_0}}&\qw&\qw\\
\lstick{\ket{\psi}}&\qw& \qw& \ghost{U_A}& \qw& \ghost{U_A}& \qw&&&\cdots&&&&\ghost{U_A}&\qw&\qw&\qw\\
}}
\caption{(a): Circuit for the controlled rotation gate $\text{CR}_\phi$. (b): Circuit for quantum eigenvalue transformation. The circuit in (b) gives a block encoding for $U_{\mathrm{Re}(p(A))}$ based on the block encoding $U_A$ and phase factors $(\phi_0, \ldots, \phi_{\t})$, where $\t=\deg_f(\epsilon)$ and the polynomial $p(\x)$ and the phase factors satisfy \eqref{eq:QSP}.}
\label{fig:QETcr}
\end{figure}

There are several methods to find the phase factors $(\phi_0^e, \ldots, \phi_{\deg_f(\epsilon)}^e)$ and $(\phi_0^o, \ldots, \phi_{\deg_f(\epsilon)}^o)$ in $[-\pi, \pi]^{\deg_f(\epsilon)+1}$ in a stable and efficient way. For example, we refer to \cite{haah2019product,chao2020finding,dong2021efficient,ying2022stable,dong2022infinite} for more details. We summarize the procedure given above in \Cref{lem:QET} below, where we assume that $f$ is either even or odd for simplicity. 
\begin{lemma}\label{lem:QET}
For an even (resp. odd) function $f:\mathbb{R}\rightarrow \mathbb{R}$ and an $(\alpha,m)$-Hermitian-block-encoding of $A$ denoted as $U_A$, there is an $(\alpha C_f, m+1, \epsilon)$-block-encoding of $f(A)$ with gate complexity $\mo(\deg_f(\epsilon)(G_A+m))$ using the circuit shown in \Cref{fig:QET}, where $C_f\geq\max\{1, \|f\|\}$ is a scaling factor, $G_A$ is the gate complexity of $U_A$ and $\deg_f(\epsilon)$ is the smallest integer such that there exists an even (resp. odd) polynomial $\tilde{f}$ with a degree bounded by $\deg_f(\epsilon)$ satisfying $\|{f}-C_f\tilde{f}\|<\epsilon$ and $\|\tilde{f}\|\leq1$. The phase factors $(\phi_0, \ldots, \phi_{\deg_f(\epsilon)})$ in \Cref{fig:QET} are related with $\tilde{f}$ through \eqref{eq:QSP} and $\tilde{f}=\mathrm{Re}(p)$. 
\end{lemma}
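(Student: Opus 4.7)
The plan is to execute the three-stage QET template just described, specialised to the parity-restricted case so that the final LCU combination step can be skipped. In order, I would (i) approximate $f$ by a parity-matched polynomial of the right degree, (ii) compile that polynomial into a list of QSP phase factors, and (iii) plug the phase factors and the given Hermitian block encoding $U_A$ into the circuit of \Cref{fig:QET}, then verify that the resulting circuit block-encodes $f(A)$ to the claimed scale and precision.

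For step (i), the existence of an even (resp.\ odd) polynomial $\tilde f$ of degree at most $\deg_f(\epsilon)$ with $\|\tilde f\|\le 1$ and $\|f-C_f\tilde f\|<\epsilon$ is built into the very definition of $\deg_f(\epsilon)$ in the statement, so nothing needs to be proved. For step (ii), I would invoke any of the phase-finding algorithms cited in \Cref{subsec:QET} (\eg\ those of Haah, Chao--Ding--Gily\'en--Huang--Szegedy, or Dong--Meng--Whaley--Lin) as a black box; each returns a sequence $(\phi_0,\ldots,\phi_{\deg_f(\epsilon)})\in[-\pi,\pi]^{\deg_f(\epsilon)+1}$ and a degree-$\deg_f(\epsilon)$ complex polynomial $p$ with $\tilde f=\mathrm{Re}(p)$ for which the scalar identity \eqref{eq:QSP} holds.

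Step (iii) is where the real work lies. The idea is to lift the $2\times 2$ scalar identity \eqref{eq:QSP} to the operator level by a standard qubitisation argument. Since $U_A$ is Hermitian, I would diagonalise $A/\alpha$ and observe that, restricted to any invariant two-dimensional subspace spanned by an eigenvector $\ket{\psi_\lambda}$ paired with its block-encoding complement, $U_A$ acts as $e^{i\arccos(\lambda/\alpha)X}$, while the controlled rotation $\text{CR}_{\phi_j}$ acts as $e^{i\phi_j Z}$ on the QSP ancilla provided the $m$ block-encoding ancillas are in $\ket{0^m}$. Interleaving these $\deg_f(\epsilon)$ times therefore reproduces precisely the right-hand side of \eqref{eq:QSP} on each such subspace, and sandwiching with the outer Hadamards extracts $\mathrm{Re}(p(\lambda/\alpha))=\tilde f(\lambda/\alpha)$ in the $(\bra{0^{m+1}},\ket{0^{m+1}})$ entry. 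Reassembling across eigenvalues yields a block encoding of $\tilde f(A/\alpha)$; the bound $\|f-C_f\tilde f\|<\epsilon$ together with the spectral calculus then promotes this to the claimed $(\alpha C_f, m+1, \epsilon)$-block-encoding of $f(A)$ in the standard QET scaling convention.

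The gate count is routine bookkeeping: the circuit calls $U_A$ exactly $\deg_f(\epsilon)$ times at cost $G_A$ per call, and interleaves $\deg_f(\epsilon)+1$ controlled rotations $\text{CR}_{\phi_j}$, each costing $\mo(m)$ elementary gates, which is absorbed into $\mo(G_A)$ since any realistic block encoding obeys $G_A=\Omega(m)$; the total is $\mo(\deg_f(\epsilon)\cdot G_A)$. The hard part I anticipate is the qubitisation step (iii): simultaneously tracking the QSP ancilla, the $m$ block-encoding ancillas, and the data register during the inductive unfolding is fiddly, and I would invoke the induction of \cite{gilyen2019quantum}*{Thm.~17} (or the closely related treatment in \cite{low2017optimal}) rather than redo it from scratch, noting that it is precisely the Hermiticity of $U_A$ that allows the parity-restricted single QET run to suffice and produces the $(m+1)$-ancilla count in the statement.
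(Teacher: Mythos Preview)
Your proposal is correct and follows exactly the same three-step QET template that the paper lays out in \Cref{subsec:QET}; indeed, the paper does not give a separate proof of \Cref{lem:QET} at all but merely states it as a summary of that template, so your write-up is actually more detailed than the paper's treatment. The qubitisation sketch you give for step (iii) and the appeal to \cite{gilyen2019quantum}*{Thm.~17} are appropriate and match what the cited references supply.
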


\subsection{Discretization of pseudo-differential operators}\label{subsec:disc}
As mentioned in \Cref{sec:intro}, the PDO considered in this paper is defined for periodic functions on $\Omega = [0, 1]^d$:
\[
A f(\x) = \sum_{\z \in \mathbb{Z}^d} e^{2 \pi i \x \cdot \z} a(\x,\z) \hat{f}(\z).
\]
In most numerical treatments, the function $f$ is given on a discrete grid $X=\{\frac{\x}{P}\equiv(\frac{x_{1}}{P}, \ldots, \frac{x_{d}}{P}): x_j \in\{0, 1, \ldots, P-1\}\}$, where $P = 2^p$ is the number of discrete points used for each dimension. Notice that here we slightly abuse the notation by reusing $\x$ for the integer index of the grid points. Since the space variable takes value on the Cartesian grid $X$, the frequency domain is discretized correspondingly on $\{-\frac{P}{2}, \ldots, \frac{P}{2}-1\}^d$, which leads to the discretized PDO:
\begin{equation}\label{eq:pdoZ}
A f(\x) \equiv \sum_{\z \in \{-\frac{P}{2}, \ldots, \frac{P}{2}-1\}^d} e^{2 \pi i  \x\cdot \z/P} a(\frac{\x}{P},\z) \hat{f}(\z), \quad \x\in \Z,
\end{equation}
where $\Z=\{0,1, \ldots, P-1\}^d$. We adopt an abuse of notation and denote the discretized PDO by $A$ too.

Though the frequency variable $\z$ is discretized on $\{-\frac{P}{2}, \ldots, \frac{P}{2}-1\}^d$ in \eqref{eq:pdoZ}, by the convention of discrete Fourier transform (DFT) and fast Fourier transform (FFT), the frequency $(P/2, \ldots, P-1)$ is often identified with $(-P/2, \ldots, -1)$, respectively, since $P$ is a period for the frequency variable after DFT. In other words, the discretized PDO can be written as
\[
A f(\x) = \sum_{\z \in \Z} e^{2 \pi i  \x\cdot \z/P} \tilde{a}(\frac{\x}{P},\z) \hat{f}(\z), \quad \x\in \Z=\{0,1, \ldots, P-1\}^d,
\]
where 
\[
\tilde{a}(\x, \z) \equiv a\left(\x, \z-P\sum_{\z_j\geq P/2}\bm{e}_j\right),
\]
and $\bm{e}_j$ is the $j$-th standard basis vector in $\mathbb{C}^d$. As an example, when $d=1$, we have
\[
\tilde{a}(\x, \z) = 
\begin{cases}
a(\x, \z), \quad 0\leq\z<P/2,\\
a(\x, \z-P), \quad P/2\leq\z<P.
\end{cases}
\]
To simplify the notation and avoid repetitive use of $P$, we further define 
\begin{equation}
\a(\x, \z) \equiv \tilde{a}(\frac{\x}{P}, \z),
\end{equation}
and the discretized PDO becomes
\begin{equation}\label{eq:pdoZgen}
A f(\x) = \sum_{\z \in \Z} e^{2 \pi i  \x\cdot \z/P} \a({\x},\z) \hat{f}(\z), \quad \x\in \Z.
\end{equation}
It is clear that $\sup|\a|=\sup|a|$. In what follows, we also refer to $\a$ as the symbol of the PDO to be computed. 

\section{Block encoding for generic symbols}\label{sec:arithmD}
This section is concerned with the block encoding of the PDO \eqref{eq:pdoZ} (or rather \eqref{eq:pdoZgen}) with a generic symbol $a(\x,\z)$, without assuming any additional structure. In order to compute the PDO in \eqref{eq:pdoZgen}, a simple strategy is to first lift the state to the phase space $\Z\times\Z$. Then the multiplication of $\a(\x, \z)$ in \eqref{eq:pdoZgen} can be performed by diagonal matrix block encodings. Combining the QFT circuit and the block encoding of diagonal matrices, one can construct the entire circuit as illustrated in \Cref{fig:PDO1}. 

\begin{figure}[ht]
  \centerline{                                                                  
  \Qcircuit @C=1.5em @R=1.5em {
    \lstick{\ket{0^{pd}}}&\gate{H^{\otimes pd}}&\multigate{1}{U_{\text{ph}}^{\otimes d}}&\multigate{2}{U_\a}&\qw&&&\frac{1}{|Af|}\sum_{\x,\z}\a(\x, \z)e^{\frac{2\pi i \x\cdot\z}{P}}\hat{f}(\z)\ket{\x}\\
    \lstick{\frac{1}{|f|}\sum_{\x}f(\xp)\ket{\x}} & \gate{{U_{\text{FT}}^\dag}^{\otimes d}}&\ghost{U_{\text{ph}}^{\otimes d}}& \ghost{U_\a}&\gate{H^{\otimes pd}} &\qw&\meter \\
    \lstick{\ket{0^{b}}} & \qw&\qw& \ghost{U_\a}&\qw &\qw&\meter \\
}}    
\caption{Circuit that implements the PDO in \eqref{eq:pdoZgen} with a generic symbol. Here $b$ is the number of ancilla qubits needed for $U_\a$, $H$ is the Hadamard gate, $\frac{1}{|f|}\sum_{\x}f(\xp)\ket{\x}$ is the normalized input data , $U_{\text{FT}}$ is the QFT circuit, $U_{\text{ph}}$ and $U_\a$ are the circuits that perform the multiplication of $e^{2\pi i \x \cdot\z/P}$ and $\a(\x,\z)$ in \eqref{eq:pdoZgen}, and the desired output is obtained with normalizing factor $\frac{1}{|Af|}$ when getting $\ket{0^{pd+b}}$ for the $pd+b$ qubits on the bottom.}
\label{fig:PDO1}                                                         
\end{figure}

Now we explain the circuit displayed in \Cref{fig:PDO1} in more detail. First, we assume that the information of the function $f$ is prepared by a normalized vector $\frac{1}{|f|}\sum_{\x}f(\xp)\ket{\x}$, where $|f|$ is the normalization factor
\[|f|=\sqrt{\sum_{\x\in\Z}\left|f\left(\xp\right)\right|^2},\]
and $\Z=\{0,1, \ldots, P-1\}^d$. For functions $f$ with certain properties such as integrability, the state $\frac{1}{|f|}\sum_{\x}f(\xp)\ket{\x}$ can be constructed efficiently (see \cite{grover2002creating} for more details). For the rest of the paper, we assume the accessibility of the state $\frac{1}{|f|}\sum_{\x}f(\xp)\ket{\x}$ as an input. 

\textbf{Step 1. Apply QFT and lift the input state to the phase space.} 
We first obtain the representation of $f$ in the frequency domain by QFT. After applying the (inverse) QFT to the state $\frac{1}{|f|}\sum_{\x}f(\xp)\ket{\x}$ for $d$ times, we get 
\begin{equation}\label{eq:getxi}
\frac{1}{|f|}\frac{1}{\sqrt{P^d}}\sum_\z \sum_{\x} f(\xp)e^{-2\pi i \z \cdot \x/P}\ket{\z} = \frac{\sqrt{P^d}}{|f|}\sum_\z\hat{f}(\z)\ket{\z}. 
\end{equation}
Then the state $\frac{1}{|f|}\sum_{\x,\z}\hat{f}(\z)\ket{\x}\ket{\z}$ is obtained by applying the Hadamard gates $H^{\otimes pd}$ to the $\x$-register and putting both registers together. 

\textbf{Step 2. Multiply the phase $e^{2\pi i \x\cdot\z/P}$ with $U_{\text{ph}}$}.
A naive way of multiplying the phase $e^{2\pi i \x\cdot\z/P}$ is to use \Cref{prop:naivemult}, which involves many ancilla qubits and reduces the success probability. Here, we develop an efficient implementation for multiplication without involving any extra error or ancilla qubits in the following lemma. 

\begin{lemma}\label{lem:dot}
The $2p$-qubit circuit $U_{\text{ph}}$ displayed in \Cref{fig:lemma2} implements the unitary operator:
\begin{equation}
    \ket{x}\ket{\xi}\mapsto  e^{2\pi i x\xi/P}\ket{x}\ket{\xi}, \quad 0\le x,\xi<P,
\end{equation}
with $\mo(p^2)$ gate complexity precisely without ancilla qubits.
\begin{figure}[ht]
  \centerline{                                                                  
    \Qcircuit @C=.9em @R=0em @!R{  
      \lstick{\ket{x_0}} & \gate{R_p} & \gate{R_{p-1}} & \qw & \cdots & &  \gate{R_1}\    & \qw & \qw& \qw & \cdots &  & \qw& \qw & \cdots &  & \qw& \qw  \\      
      \lstick{\ket{x_1}} & \qw & \qw & \qw & \cdots & &  \qw\    & \gate{R_{p-1}} & \gate{R_{p-2}} & \qw & \cdots & &  \gate{R_1}    &\qw & \cdots & & \qw\ & \qw \\  
      \lstick{\vdots}&&&&\vdots &&&&&&\vdots &&&&\vdots  \\
      \lstick{\ket{x_{p-1}}} & \qw & \qw & \qw & \cdots & &  \qw\    & \qw & \qw & \qw & \cdots & &  \qw&\qw    &\cdots & & \gate{R_1} & \qw \\  
      \lstick{\ket{\xi_0}} & \ctrl{-4} & \qw & \qw & \cdots & &  \qw\    & \ctrl{-3} & \qw& \qw & \cdots &  & \qw& \qw & \cdots &  & \ctrl{-1} & \qw  \\      
      \lstick{\ket{\xi_1}} & \qw & \ctrl{-5} & \qw & \cdots & &  \qw\    & \qw & \ctrl{-4} & \qw & \cdots & &  \qw    &\qw & \cdots & & \qw\ & \qw \\  
      \lstick{\vdots}&&&&\vdots &&&&&&\vdots &&&&\vdots  \\
      \lstick{\ket{\xi_{p-2}}} & \qw & \qw & \qw & \cdots & &  \qw\    & \qw & \qw & \qw & \cdots & &  \ctrl{-6} &\qw    &\cdots & & \qw & \qw \\
      \lstick{\ket{\xi_{p-1}}} & \qw & \qw & \qw & \cdots & &  \ctrl{-8}\    & \qw & \qw & \qw & \cdots & &  \qw&\qw    &\cdots & & \qw & \qw \\
  }}    
\caption{Circuit for $U_{\text{ph}}$, the phase multiplication $\ket{x}\ket{\xi}\mapsto  e^{2\pi i x\xi/P}\ket{x}\ket{\xi}$. Here $R_{j} = \ket{0}\bra{0}+e^{2\pi i  \cdot2^{-j}}\ket{1}\bra{1}$ is a rotation operator.}
\label{fig:lemma2}                                                                       
\end{figure}
\end{lemma}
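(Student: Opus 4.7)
The plan is to verify the circuit by direct calculation: write $x$ and $\xi$ in binary, expand $x\xi/P$ as a double sum, discard integer contributions that vanish modulo $1$ under $e^{2\pi i\cdot}$, and match the surviving terms to the controlled rotations in \Cref{fig:lemma2}.

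Concretely, I would first write $x=\sum_{j=0}^{p-1} x_j 2^j$ and $\xi=\sum_{k=0}^{p-1}\xi_k 2^k$ with $x_j,\xi_k\in\{0,1\}$. Then
\eqs{
\frac{x\xi}{P} \;=\; \sum_{j=0}^{p-1}\sum_{k=0}^{p-1} x_j\xi_k\, 2^{j+k-p}.
}
Whenever $j+k\ge p$, the exponent $x_j\xi_k\, 2^{j+k-p}$ is a non-negative integer, so that summand contributes trivially to $e^{2\pi i x\xi/P}$. Therefore
\eqs{
e^{2\pi i x\xi/P} \;=\; \prod_{\substack{0\le j,k\le p-1\\ j+k<p}} \exp\!\bigl(2\pi i\, x_j\xi_k\, 2^{j+k-p}\bigr).
}

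Next I would identify each factor on the right with exactly one gate in the diagram. The gate $R_{m}=\ket{0}\bra{0}+e^{2\pi i\cdot 2^{-m}}\ket{1}\bra{1}$ controlled by qubit $\ket{\xi_k}$ and acting on qubit $\ket{x_j}$ multiplies the state by $\exp(2\pi i\, x_j\xi_k\, 2^{-m})$. Reading off \Cref{fig:lemma2}, the block of gates targeting $\ket{x_j}$ applies, for each $k=0,1,\ldots,p-1-j$, the rotation $R_{p-j-k}$ controlled by $\ket{\xi_k}$; this contributes precisely the factor $\exp(2\pi i\, x_j\xi_k\, 2^{j+k-p})$ for every pair $(j,k)$ with $j+k<p$, and no others. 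Since all these controlled rotations are diagonal in the computational basis, they commute with one another, so the order of application is immaterial and the composite action on $\ket{x}\ket{\xi}$ is the scalar $e^{2\pi i x\xi/P}$, leaving the basis state unchanged. This gives the claimed unitary exactly, with no ancilla.

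For the gate count I would simply tally: the number of controlled rotations is $\sum_{j=0}^{p-1}(p-j)=p(p+1)/2=\mo(p^2)$, each of which is a constant-size two-qubit gate, so the total complexity is $\mo(p^2)$. The proof has no real obstacle; the only subtle point is the binary bookkeeping that shows why the diagonal $j+k\ge p$ terms may be omitted and why the indexing of $R_{p-j-k}$ in the figure matches the double-sum indices. Once that observation is made, correctness and complexity both follow immediately from the fact that each gate is diagonal and contributes a single multiplicative phase.
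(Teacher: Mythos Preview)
Your proof is correct and follows essentially the same route as the paper: expand $x\xi/P$ bitwise, drop the $j+k\ge p$ terms as integers, and match each surviving factor to the controlled rotation $R_{p-j-k}$ on $\ket{x_j}$ controlled by $\ket{\xi_k}$. Your explicit gate count $p(p+1)/2$ and the remark that the diagonal gates commute are minor elaborations, but the argument is the same as the paper's.
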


\begin{proof}
The idea of the construction is similar to the implementation of QFT, which is based on bit-wise controlled rotation. We first write the binary representation of integers 
\begin{equation*}
  x = (x_{p-1}\cdots x_0.),\quad \xi = (\xi_{p-1}\cdots\xi_0.),
\end{equation*}
and do the following calculation:
\begin{equation}\label{eq:Uph}
    \begin{aligned}
      &e^{2\pi i x\xi/P}\ket{x}\ket{\xi} 
      = \left(\prod_{j=0}^{p-1}\prod_{k=0}^{p-1}e^{2\pi i x_j\xi_k \cdot2^{j+k-p}}\right)\ket{x_{p-1}\cdots x_0}\ket{\xi_{p-1}\cdots \xi_0}\\
      =& \left(\prod_{0\le j+k<p}e^{2\pi i x_j\xi_k \cdot2^{j+k-p}}\right)\ket{x_{p-1}\cdots x_0}\ket{\xi_{p-1}\cdots \xi_0}\\
      =& \left(\left(\prod_{k=0}^{p-1}e^{2\pi i x_{0}\xi_{k} \cdot 2^{k-p}}\right)\ket{x_{0}}\right)\otimes\cdots\otimes\left(\left(\prod_{k=0}^{0}e^{2\pi i x_{p-1}\xi_{k} \cdot 2^{k-1}}\right)\ket{x_{p-1}}\right)\otimes\ket{\xi_{p-1}\cdots \xi_0},
    \end{aligned}
\end{equation}
where the second equality is true because $e^{2\pi i x_j\xi_k \cdot2^{j+k-p}}=1$ when $j+k\ge p$. The circuit corresponding to the unitary in \eqref{eq:Uph} can be implemented by a series of controlled rotations
\begin{equation*}
    \Qcircuit @C=1em @R=1em {
\lstick{\ket{x_j}} & \gate{R_{p-j-k}} & \rstick{e^{2\pi i x_j\xi_k \cdot2^{j+k-p}}\ket{x_j}} \qw \\
\lstick{\ket{\xi_k}} & \ctrl{-1} & \rstick{\ket{\xi_k}} \qw
}.
\end{equation*}
where $R_{j} =R_z(\pi/2^{j-1})= \ket{0}\bra{0}+e^{2\pi i  \cdot2^{-j}}\ket{1}\bra{1}$. Finally, the circuit shown in \Cref{fig:lemma2} is obtained after arranging the controlled rotations in the corresponding places. 
\end{proof}

Rewriting the state $\frac{1}{|f|}\sum_{\x,\z}e^{2\pi i \x\cdot\z/P}\hat{f}(\z)\ket{\x}\ket{\z}$ as 
\[
\frac{1}{|f|}\sum_{\x,\z}e^{2\pi i x_{d}\xi_{d}/P}\cdots e^{2\pi i x_{1}\xi_{1}/P}\hat{f}(\z)\ket{\x}\ket{\z},
\]
then the map from $\frac{1}{|f|}\sum_{\x,\z}\hat{f}(\z)\ket{\x}\ket{\z}$ to $\frac{1}{|f|}\sum_{\x,\z}e^{2\pi i \x\cdot\z/P}\hat{f}(\z)\ket{\x}\ket{\z}$ can be performed by applying \Cref{lem:dot} for $d$ times to the register pairs $(x_{d}, \xi_{d}), \ldots, (x_{1}, \xi_{1})$. The corresponding circuit is denoted as $U_\text{ph}^{\otimes d}$ and involves $\mo(p^2d)$ elementary gates with no ancilla qubits. After the multiplication of $U_{\text{ph}}^{\otimes d}$, one obtains the state $\frac{1}{|f|}\sum_{\x,\z}e^{2\pi i \x\cdot\z/P}\hat{f}(\z)\ket{\x}\ket{\z}$. 

\textbf{Step 3. Multiply the symbol $\a(\x,\z)$. }
The next component in \Cref{fig:PDO1} is the diagonal multiplication $U_\a$, which is designed to approximate the map
\begin{equation}\label{eq:Ua}
\frac{1}{|f|}\sum_{\x,\z}e^{2\pi i \x\cdot\z/P}\hat{f}(\z)\ket{\x}\ket{\z}\ket{0^b} \mapsto \frac{1}{C_a|f|}\sum_{\x,\z}\a(\x, \z)e^{2\pi i \x\cdot\z/P}\hat{f}(\z)\ket{\x}\ket{\z}\ket{0^b}+\perp,
\end{equation}
where $C_a>0$ is a constant that depends on $\a(\x,\z)$, $b$ is the number of ancilla qubits used for $U_\a$ and $\perp$ is an unnormalized state that is orthogonal to any state of the form $\ket{\x}\ket{\z}\ket{0^b}$.
As mentioned earlier, the idea is to treat $(\x, \z)$ as a $2d$-dimensional variable and utilize the result from arithmetic circuit construction. Leveraging the reversible computational model and the uncomputation technique, any classical arithmetic operation can be implemented by a quantum circuit efficiently. More specifically, one can construct a corresponding quantum circuit using $\mo(\polylog(\frac{1}{\epsilon}))$ ancilla qubits and $\mo(\polylog(\frac{1}{\epsilon}))$ gates, where $\epsilon$ is the precision one wants to achieve (Cf. \cite{nielsen2002quantum,rieffel2011quantum}). We state a general result for an efficient block encoding of diagonal matrices $D_g$, as defined in \eqref{eq:diagg}, which is summarized in \Cref{prop:diag1}. A similar idea has been used in \cite{grover2002creating} to create a given state, in \cite{harrow2009quantum} to construct the reciprocals of the eigenvalues and in \cite{tong2021fast} to implement the diagonal preconditioner.

\begin{prop}\label{prop:diag1}
Assume that $g:\R^m\rightarrow\R$ is a smooth arithmetic function with $\sup |g| <\infty$. Then there is an $(C, \mo(\polylog(\frac{1}{\epsilon})+\poly(mp)), \epsilon)$-block-encoding of $D_g$ with $\mo(\polylog(\frac{1}{\epsilon})+\poly(mp))$ gates, where $D_g$ is a diagonal operator on the Hilbert space $\mathbb{C}^{mp}$ defined in \eqref{eq:diagg}, and $C\geq\sup|g|$. 
\end{prop}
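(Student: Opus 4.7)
The plan is to implement $D_g$ via the standard compute--rotate--uncompute scheme: reversibly compute a fixed-point approximation $\tilde g$ of $g/C$ into an ancilla register, rotate a single flag qubit so that its $\ket{0}$-amplitude on input $\ket{x_1,\ldots,x_m}$ equals $\tilde g(x_1,\ldots,x_m)$, and then uncompute the arithmetic register. This will produce a unitary $U$ satisfying $(\bra{0^{b+1}}\otimes I)\,U\,(\ket{0^{b+1}}\otimes I) = D_{\tilde g}$, which by the block-encoding definition makes $U$ a $(C, b+1, \epsilon)$-block-encoding of $D_g$ provided $\sup_x|g(x) - C\tilde g(x)| \le \epsilon$.

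\emph{Step 1 (reversible arithmetic).} Since $g$ is smooth and $C\ge\sup|g|$, I would approximate $g/C$ by a fixed-point function $\tilde g:\{0,\ldots,P-1\}^m\to[-1,1]$ with the above pointwise error using $q=\mo(\log(C/\epsilon))=\mo(\log(1/\epsilon))$ bits of precision. Because $g$ is built from elementary arithmetic and smooth primitives, such an evaluator exists as a classical circuit of size $\mo(\polylog(1/\epsilon)+\poly(mp))$; I lift it to a reversible (hence quantum) circuit $V$ via Bennett-style garbage collection (Cf.~\cites{nielsen2001quantum,rieffel2011quantum}). This gives
\[
V\ket{x}\ket{0^b} = \ket{x}\ket{\tilde g(x)}\ket{0^{b'}},
\]
with $b=\mo(\polylog(1/\epsilon)+\poly(mp))$ ancillas holding the $q$ bits of $\tilde g(x)$ (including its sign bit) plus scratch that is cleared after uncomputing intermediate garbage.

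\emph{Step 2 (flag rotation and uncomputation).} Introduce a flag qubit initialized to $\ket{0}$ and, using the $q$ bits of $\tilde g$ as controls, apply a sequence of $\mo(q)$ controlled $R_y$ rotations implementing
\[
W\ket{\tilde g(x)}\ket{0} = \ket{\tilde g(x)}\Bigl(\tilde g(x)\ket{0} + \sqrt{1-\tilde g(x)^2}\,\ket{1}\Bigr),
\]
with the sign of $\tilde g(x)$ handled by a single controlled sign correction driven by the sign bit. Setting $U = V^\dagger W V$, the $\tilde g$-register is restored to $\ket{0^b}$ in the $\ket{0}$-branch of the flag, so
\[
(\bra{0^{b+1}}\otimes I)\,U\,(\ket{0^{b+1}}\otimes I)\ket{x} = \tilde g(x)\ket{x} = \tfrac{1}{C}D_{C\tilde g}\ket{x}.
\]
Since $D_g$ and $D_{C\tilde g}$ are diagonal, $\|D_g-D_{C\tilde g}\|_{\mathrm{op}}=\sup_x|g(x)-C\tilde g(x)|\le\epsilon$, establishing the block-encoding claim. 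The three blocks $V$, $W$, $V^\dagger$ together use $\mo(\polylog(1/\epsilon)+\poly(mp))$ ancillas and gates, as required.

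\emph{Main obstacle.} The substantive input is the quantitative reversible arithmetic bound: obtaining an $\epsilon$-accurate fixed-point evaluation of a generic smooth $g$ at cost $\polylog(1/\epsilon)+\poly(mp)$. This is standard but must be delegated to a library of reversible primitives (ripple-carry addition, multiplication, and smooth-function evaluation via truncated Taylor series or piecewise-polynomial lookup) rather than rederived here. The only other subtlety is sign tracking: the flag rotation must produce the signed amplitude $\tilde g(x)$, not $|\tilde g(x)|$, which is handled by routing the sign bit of the arithmetic register as a control for a single $Z$-type correction before the $R_y$ cascade.
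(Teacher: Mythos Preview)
Your approach is essentially the paper's: reversibly compute a fixed-point value into ancillas, perform bit-controlled single-qubit rotations on a flag, and uncompute. There is, however, a subtle but genuine gap in Step~2. A cascade of $\mo(q)$ singly-controlled $R_y$ rotations, each conditioned on one bit of the register holding $\tilde g(x)$, implements $R_y(\alpha)$ with $\alpha$ \emph{linear} in those bits; the resulting $\ket{0}$-amplitude on the flag is then $\cos(\alpha/2)$ (or $\sin$), not $\tilde g(x)$ itself. Hence the map $\ket{\tilde g(x)}\ket{0}\mapsto\ket{\tilde g(x)}\bigl(\tilde g(x)\ket{0}+\sqrt{1-\tilde g(x)^2}\,\ket{1}\bigr)$ cannot be realized by $\mo(q)$ such rotations when the register stores the value $\tilde g(x)$ directly.

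The paper resolves exactly this point: in the arithmetic step it loads not $g/C$ but the \emph{angle} $\theta=\frac{1}{\pi}\arcsin(|g|/C)$ (together with a sign bit) to $t=\lceil\log_2(C\pi/\epsilon)\rceil$ bits. The bit-controlled rotations $R_y(\pi/2^{j})$ then add to the total angle $\pi\theta$, yielding amplitude $\sin(\pi\theta)\approx g/C$ with error at most $\epsilon/C$, and a single controlled-$Z$ on the sign bit supplies the sign. Since your Step~1 already invokes an arithmetic library for smooth primitives, folding one additional $\arcsin$ into it costs nothing in the stated $\mo(\polylog(1/\epsilon)+\poly(mp))$ bound and closes the gap; but as written, Step~2 does not produce the claimed amplitude.
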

\begin{proof}
The circuit $U_g$ is constructed as follows. Let $t = \lceil \log_2(\frac{C\pi}{\epsilon})\rceil$, and let $\theta(x_1, \ldots, x_m)$ be a map that gives $\ket{\theta_{\text{sgn}}\theta_{t-1}\theta_{t-2}\ldots\theta_{0}}$, where $(.\theta_{t-1}\theta_{t-2}\ldots\theta_{0})$ is the closest $t$-bit fixed-point representation of $\frac{1}{\pi}\arcsin(|g(x_1, \ldots, x_m)|/C)$, and $\theta_{\text{sgn}}$ is assigned the value $0$ if $g\geq0$ and the value $1$ otherwise.  
For an arbitrary basis state 
$\ket{x_{m}}\cdots\ket{x_{1}}$, we consider the system with $t+2$ ancilla qubits $\ket{0}\ket{x_{m}}\cdots\ket{x_{1}}\ket{0^{t+1}}$. Here $\ket{x_j} = \ket{x_{j,p-1} \cdots x_{j,0}}$ for each $j$. Using the reversible computational model and uncomputation (\cite{nielsen2002quantum,rieffel2011quantum}), the classical circuit:
\[
\ket{0}\ket{x_{m}}\cdots\ket{x_{1}}\ket{0^{t+1}} \rightarrow \ket{0}\ket{x_{m}}\cdots\ket{x_{1}}\ket{\theta_{\text{sgn}}\theta_{t-1}\theta_{t-2}\ldots\theta_{0}}
\]
can be constructed with $\mo(\poly(t)+\poly(mp))$ gates and $\mo(\poly(mp))$ ancilla qubits. Then we apply the circuit in Figure \ref{fig:CR} on the $t+2$ ancilla qubits. The state obtained is:
\[(-1)^{\theta_{\text{sgn}}}(\cos(\pi(.\theta_{t-1}\ldots\theta_0))\ket{1}+\sin(\pi(.\theta_{t-1}\ldots\theta_0))\ket{0})\ket{x_{m}}\cdots\ket{x_{1}}\ket{\theta_{\text{sgn}}\theta_{t-1}\theta_{t-2}\ldots\theta_{0}},\]
which can then be mapped to 
\[(-1)^{\theta_{\text{sgn}}}(\cos(\pi(.\theta_{t-1}\ldots\theta_0))\ket{1}+\sin(\pi(.\theta_{t-1}\ldots\theta_0))\ket{0})\ket{x_{m}}\cdots\ket{x_{1}}\ket{0^{t+1}},\]
via uncomputation.
% with the classical circuit. 

Since $|(.\theta_{t-1}\ldots\theta_0)-\frac{1}{\pi}\arcsin(|g(x_1, \ldots, x_m)|/C)|<\frac{\epsilon}{C\pi}$, we have
$|(-1)^{\theta_{\text{sgn}}}\sin(\pi(.\theta_{t-1}\ldots\theta_0))-\frac{1}{C}g(x_1, \ldots, x_m))|<\frac{\epsilon}{C}$, which means the desired state $\frac{1}{C}g(x_1, \ldots, x_m)\ket{x_{m}}\cdots\ket{x_{1}}$ is obtained with error at most $\frac{\epsilon}{C}$ upon measuring the ancilla qubits and getting $\ket{0^{t+2}}$. 
\end{proof}

\begin{figure}[ht]
  \centerline{                                                                  
    \Qcircuit @C=1.5em @R=1.5em {
      \lstick{\ket{0}}&\gate{Z}& \gate{R_{y}(\pi)} & \gate{R_{y}(\pi/2)} &\qw&\cdots&  & \gate{R_{y}(\pi/2^{t-1})}&\gate{X}&\qw\\
      \lstick{\ket{\theta_{\text{sgn}}}}& \ctrl{-1}& \qw & \qw&\qw&\qw&\qw&\qw&\qw&\qw\\
\lstick{\ket{\theta_{t-1}}}& \qw& \ctrl{-2} & \qw&\qw&\qw&\qw&\qw&\qw&\qw\\
\lstick{\ket{\theta_{t-2}}}& \qw& \qw& \ctrl{-3} &\qw& \qw&\qw&\qw&\qw&\qw\\
\lstick{\cdots}&\\
\lstick{\ket{\theta_{0}}}& \qw& \qw & \qw&\qw&\qw &\qw& \ctrl{-5} & \qw&\qw\\
}}    
\caption{The controlled rotation part of $U_g$.}
\label{fig:CR}         
\end{figure}

As a result of applying the block encoding given in \Cref{prop:diag1} with $m=2d$, $g = \a$ and $\epsilon$ replaced by $\epsilon/\sqrt{P^d}$ to the state $\frac{1}{|f|}\sum_{\x,\z}e^{2\pi i \x\cdot\z/P}\hat{f}(\z)\ket{\x}\ket{\z}\ket{0^b}$, one obtains the state
\[
\ket{\phi}\ket{0^b}+\perp,
\]
using $\mo(\polylog(\frac{1}{\epsilon})+\poly(pd))$ ancilla qubits and $\mo(\polylog(\frac{1}{\epsilon})+\poly(pd))$ gates where
\begin{equation}\label{eq:ineqphi}
\left\|\ket{\phi}-\frac{1}{C_a|f|}\sum_{\x,\z}\a(\x, \z)e^{2\pi i \x\cdot\z/P}\hat{f}(\z)\ket{\x}\ket{\z}\right\|<\frac{\epsilon}{\sqrt{P^d}C_a},
\end{equation}
and $C_a\geq\sup|a|=\sup|\a|$ is a constant. Here $\ket{\phi}$ denotes an unnormalized $2pd$-qubit state and $\perp$ is an unnormalized state orthogonal to all state with the form $\ket{\x}\ket{\z}\ket{0^b}$. 

\textbf{Step 4. Sum over the frequency variable.}
Finally, after applying the Hadamard gate $H^{\otimes pd}$ to the $\z$ registers, we obtain the state
\begin{equation}
\left((I_{pd}\otimes\ket{0^{pd}}) (I_{pd}\otimes\bra{0^{pd}})(I_{pd}\otimes H^{\otimes pd})\ket{\phi}\right)\ket{0^b}+\tilde{\perp},
\end{equation}
where $\tilde{\perp}$ is an unnormalized state that is orthogonal to all states of the form $\ket{\x}\ket{0^{pd+b}}$ and
\begin{equation}\label{eq:gen_final}
\begin{aligned}
    \bigg\|&\left((I_{pd}\otimes\ket{0^{pd}}) (I_{pd}\otimes\bra{0^{pd}})(I_{pd}\otimes H^{\otimes pd})\ket{\phi}\right)\ket{0^b}\\
    &-\frac{1}{\sqrt{P^d}|f|C_a}\sum_{\x,\z}\a(\x,\z)e^{2\pi i \x\cdot\z/P}\hat{f}(\z)\ket{\x}\ket{0^{pd}}\ket{0^b}\bigg\|<\frac{\epsilon}{\sqrt{P^d}C_a},
\end{aligned}
\end{equation} 
which can be seen from \eqref{eq:ineqphi}. Therefore, one obtains the desired state on the $\x$ registers upon measuring $\ket{0^{pd}}$ for the $\z$ registers and $\ket{0^{b}}$ for the ancilla qubits. Notice that there is an extra scaling factor $\frac{1}{\sqrt{P^d}}$ due to the application of Hadamard gates, so the success probability is $\mo(2^{-pd})$. The complete circuit we use is exactly the one shown in \Cref{fig:PDO1}.

The block encoding scheme of the PDO \eqref{eq:pdoZgen} constructed in this section can be summarized in the following theorem:
\begin{theorem}\label{thm:generica}
For a generic symbol $a(\x,\z)$, a block encoding of the corresponding discretized PDO defined by \eqref{eq:pdoZgen} can be $(2^{\frac{pd}{2}}C_a, \mo(\poly(pd)+\polylog(1/\epsilon)), \epsilon)$-block-encoded using the circuit displayed in \Cref{fig:PDO1} with gate complexity $\mo(\poly(pd)+\polylog(1/\epsilon))$, where $C_a\geq\sup|a|$ is a constant. 
\end{theorem}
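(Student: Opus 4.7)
The plan is to verify correctness of the four-step circuit shown in \Cref{fig:PDO1}, tracking the state through each stage and accumulating the error and normalization bounds. I would begin by taking the input register prepared in the normalized state $\frac{1}{|f|}\sum_{\x\in\Z}f(\x/P)\ket{\x}$ and apply the inverse QFT dimension-by-dimension to obtain $\frac{\sqrt{P^d}}{|f|}\sum_{\z}\hat{f}(\z)\ket{\z}$ on that register, as in \eqref{eq:getxi}. Tensoring with a fresh $pd$-qubit register prepared as $\frac{1}{\sqrt{P^d}}\sum_{\x}\ket{\x}$ via $H^{\otimes pd}$ then produces $\frac{1}{|f|}\sum_{\x,\z}\hat{f}(\z)\ket{\x}\ket{\z}$, which is the desired lift to phase space.

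Next, I would apply $U_{\text{ph}}^{\otimes d}$, obtained by invoking \Cref{lem:dot} on each coordinate pair $(x_j,\xi_j)$; this is exact and contributes only $\mo(p^2 d)$ gates with no ancillas, producing $\frac{1}{|f|}\sum_{\x,\z}e^{2\pi i\x\cdot\z/P}\hat{f}(\z)\ket{\x}\ket{\z}$. I then invoke \Cref{prop:diag1} with $m=2d$, arithmetic function $g=\a$, scaling constant $C_a\geq\sup|a|$, and target precision $\epsilon/\sqrt{P^d}$, yielding a block encoding $U_\a$ with $\mo(\poly(pd)+\polylog(1/\epsilon))$ ancilla qubits and gates. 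Its action on the lifted state produces $\ket{\phi}\ket{0^b}+\perp$ with $\ket{\phi}$ satisfying the bound \eqref{eq:ineqphi}.

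Finally, applying $H^{\otimes pd}$ to the $\z$ register and projecting onto $\ket{0^{pd}}$ on that register performs the desired frequency sum, introducing a $1/\sqrt{P^d}$ factor that combines with the $\sqrt{P^d}/|f|$ from the QFT step and the $C_a$ from the symbol multiplication to yield the normalization $|f|\cdot 2^{pd/2}C_a$ in front of $Af$ on the $\x$-basis, with the error bound \eqref{eq:gen_final} giving final error $\epsilon$. Summing ancillas---zero from the QFTs, Hadamards and $U_{\text{ph}}$, and $\mo(\poly(pd)+\polylog(1/\epsilon))$ from $U_\a$---and gate counts---$\mo((pd)^2)$ from QFT, $\mo(p^2d)$ from $U_{\text{ph}}^{\otimes d}$, and the bound from \Cref{prop:diag1}---establishes the claimed $(2^{pd/2}C_a,\mo(\poly(pd)+\polylog(1/\epsilon)),\epsilon)$-block-encoding.

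The main obstacle, and what I would be most careful about, is the bookkeeping that turns an $\epsilon/\sqrt{P^d}$ error produced by \Cref{prop:diag1} into a final $\epsilon$ error in the block encoding: because the Hadamard projection at the end compresses amplitude originally spread across $P^d$ components of the $\z$-register onto the single $\ket{0^{pd}}$ component, the precision requested from $U_\a$ must be tightened by exactly $\sqrt{P^d}$ so that, after rescaling by $\gamma=2^{pd/2}C_a$, the displacement between $\gamma(\bra{0}\otimes I)U(\ket{0}\otimes I)$ and $A$ falls within $\epsilon$. Verifying that this tightening is consistent with the asymptotic cost---the $\polylog(\sqrt{P^d}/\epsilon)$ from \Cref{prop:diag1} still fits in $\polylog(1/\epsilon)+\poly(pd)$---is the key check that closes the argument.
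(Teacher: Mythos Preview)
Your proposal is correct and follows essentially the same four-step argument as the paper: inverse QFT plus Hadamard lift to phase space, exact phase multiplication via \Cref{lem:dot}, diagonal symbol multiplication via \Cref{prop:diag1} with precision tightened to $\epsilon/\sqrt{P^d}$, and a final Hadamard/projection on the $\z$ register to perform the frequency sum. Your closing remark about the error bookkeeping---that the $\sqrt{P^d}$ tightening in the call to \Cref{prop:diag1} is exactly what is needed so that rescaling by $\gamma=2^{pd/2}C_a$ lands the final error at $\epsilon$, while $\polylog(\sqrt{P^d}/\epsilon)=\mo(pd+\polylog(1/\epsilon))$ keeps the cost within the stated bound---is precisely the point the paper's derivation of \eqref{eq:gen_final} is making.
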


\textbf{Challenge.}
Despite being applicable to generic symbols, one can observe from \eqref{eq:gen_final} that the success probability of the circuit in \Cref{fig:PDO1} can be low when $pd$ is large. In the following sections, we show that this challenge can be overcome when the symbol $a(\x,\z)$ has additional structures. 

\section{Efficient block encoding for separable symbols}\label{sec:separable}
As explained in \Cref{sec:arithmD}, the circuit designed as in \Cref{fig:PDO1} suffers from exponentially small success probability, despite being simple and applicable to generic PDOs. In this section, we are concerned with symbols with particular structures and an efficient block encoding of the corresponding PDOs with $\mo(1)$ success probability is constructed. 

\subsection{Separable symbols}
We first give the following definition for separable symbols. 
\begin{definition}\label{def:sep}
    A symbol $a(\x,\z)$ is \emph{separable} if $a(\x, \z) = \alpha(\x)\beta(\z)$.
\end{definition}

As explained in \Cref{subsec:disc}, we identify the frequency $(P/2, \ldots, P-1)$ with $(-P/2, \ldots, -1)$, respectively, since $P$ is a period for the frequency variable after DFT. We also define 
\begin{equation}\label{eq:sepnote}
\ba(\x) = \alpha(\xp), ~ \bb(\z) = \beta\left(\z-P\sum_{\z_j\geq P/2} \bm{e}_j\right),
\end{equation}
where $\bm{e}_j$ is the $j$-th standard basis vector in $\mathbb{C}^d$. 
With help of the notations \eqref{eq:sepnote}, the PDO \eqref{eq:pdoZ} becomes 
\begin{equation}\label{eq:pdoZsep}
A f(\x) = \sum_{\z \in \Z} e^{2 \pi i  \x\cdot \z/P} \ba(\x)\bb(\z) \hat{f}(\z), \quad \x\in \Z=\{0,1, \ldots, P-1\}^d,
\end{equation}
It is clear from the definition that $\ba$ is $P$-periodic since $\alpha$ is $1$-periodic, and we also have $\sup|\alpha|=\sup|\ba|$, $\sup|\beta|=\sup|\bb|$.

For the PDO \eqref{eq:pdoZsep}, we propose an efficient block encoding illustrated by the following circuit. 
\begin{figure}[ht]
\centerline{                                                                  
\Qcircuit @C=1.5em @R=1.5em {
    \lstick{\ket{0^{b_2}}} & \qw&\qw& \qw&\multigate{1}{U_\ba}\qw &\qw&\meter \\
    \lstick{\frac{1}{|f|}\sum_{\x}f(\xp)\ket{\x}} & \gate{{U_{\text{FT}}^\dag}^{\otimes d}}&\multigate{1}{U_\bb}& \gate{{U_{\text{FT}}}^{\otimes d}} &\ghost{U_\ba}&\qw&&&&\frac{1}{|Af|}\sum_{\x,\z}\ba(\x)\bb(\z)e^{2\pi i \x\cdot\z/P}\hat{f}(\z)\ket{\x} \\
    \lstick{\ket{0^{b_1}}} & \qw& \ghost{U_\bb}&\qw &\qw\qw&\qw&\meter \\
}
}    
\caption{Circuit for an efficient block encoding for separable PDOs \eqref{eq:pdoZsep}. Here $b_1$, $b_2$ are the number of ancilla qubits needed for $U_\bb$ and $U_\ba$, respectively, $\frac{1}{|f|}\sum_{\x}f(\xp)\ket{\x}$ is the normalized input data, $U_{\text{FT}}$ is the QFT circuit, $U_\bb$, and $U_\ba$ are the block encodings of $D_\bb$ and $D_\ba$, respectively, and the desired output is obtained with normalizing factor $\frac{1}{|Af|}$ when getting $\ket{0^{b_1+b_2}}$ for the $b_1+b_2$ ancilla qubits. $D_\bb$ and $D_\ba$ are diagonal matrices defined in \eqref{eq:diagg}.}\label{fig:PDO2}
\end{figure}

For the rest of this section, we explain the circuit in \Cref{fig:PDO2} with more details and show that it significantly improves the success probability compared with \Cref{fig:PDO1}. The circuit begins with a QFT step similar to that in \Cref{fig:PDO1}. 

\textbf{Step 1. Apply QFT and multiply the factor $\bb(\z)$.} 
With the same argument as \eqref{eq:getxi}, one obtains the state:
\[\frac{\sqrt{P^d}}{|f|}\sum_\z\hat{f}(\z)\ket{\z}\ket{0^{b_1+b_2}},\]
after applying the QFT gates ${U_{\text{FT}}^\dag}^{\otimes d}$. Now that the input is represented on the frequency domain, one can naturally implement the multiplication of the factor $\bb(\z)$ since it only depends on the frequency variable $\z$. The corresponding block $U_\bb$ can be constructed using \Cref{prop:diag1} with $m=d$, $g=\bb$ and $\epsilon$ replaced by $\frac{\epsilon}{2C_\alpha}$, where the constant $C_\alpha\geq\sup|\alpha|$. Then one obtains the state 
\begin{equation}\label{eq:state1}
(\ket{\phi_1}\ket{0^{b_1}}+\perp_1)\ket{0^{b_2}},
\end{equation}
where $\perp_1$ is an unnormalized state orthogonal to all states of the form $\ket{\z}\ket{0^{b_1}}$ and $\ket{\phi_1}$ satisfies
\begin{equation}\label{eq:b}
\left\|\ket{\phi_1}-\frac{\sqrt{P^d}}{C_\beta|f|}\sum_\z\bb(\z)\hat{f}(\z)\ket{\z}\right\|<\frac{\epsilon}{2C_\alpha C_\beta},
\end{equation}
using $\mo(\poly(pd)+\polylog(1/\epsilon))$ gates and $b_1=\mo(\poly(pd)+\polylog(1/\epsilon))$ ancilla qubits. Here $C_\beta\geq\sup|\beta|$ is a constant.

\textbf{Step 2. Apply QFT and multiply the factor $\ba(\x)$. }
In order to multiply the factor $\ba(\x)$ in the symbol, we apply QFT and convert the state to the space domain. Since $\sqrt{P^d}{U_{\text{FT}}}^{\otimes d}\sum_\z\bb(\z)\hat{f}(\z)\ket{\z} = \sum_{\x,\z}\bb(\z)e^{2\pi i \x\cdot\z/P}\hat{f}(\z)\ket{\x}$, we have 
\begin{equation}\label{eq:phib}
\begin{aligned}
&\left\|{U_{\text{FT}}}^{\otimes d}\ket{\phi_1}-\frac{1}{C_\beta|f|}\sum_{\x,\z}\bb(\z)e^{2\pi i \x\cdot\z/P}\hat{f}(\z)\ket{\x}\right\|\\
&=\left\|{U_{\text{FT}}}^{\otimes d}\ket{\phi_1}-{U_{\text{FT}}}^{\otimes d}\frac{\sqrt{P^d}}{C_\beta|f|}\sum_\z\bb(\z)\hat{f}(\z)\ket{\z}\right\|\\
&=\left\|\ket{\phi_1}-\frac{\sqrt{P^d}}{C_\beta|f|}\sum_\z\bb(\z)\hat{f}(\z)\ket{\z}\right\|<\frac{\epsilon}{2C_\alpha C_\beta},
\end{aligned}
\end{equation}
where we have used \eqref{eq:b} in the last line. 
By using \Cref{prop:diag1} again with $m=d$, $g=\ba$ and $\epsilon$ replaced by $\frac{\epsilon}{2C_\beta}$, the state $\ket{\phi_1}\ket{0^{b_2}}$ is mapped to $\ket{\phi_2}\ket{0^{b_2}}+\perp_2$, where $\perp_2$ is an unnormalized state orthogonal to all state of the form $\ket{\x}\ket{0^{b_2}}$ and $\ket{\phi_2}$ satisfies $\|\ket{\phi_2}-\frac{1}{C_\alpha}{D}_\ba{U_{\text{FT}}}^{\otimes d}\ket{\phi_1}\|<\frac{\epsilon}{2C_\alpha C_\beta}$. In this step, $\mo(\poly(pd)+\polylog(1/\epsilon))$ gates and $b_2=\mo(\poly(pd)+\polylog(1/\epsilon))$ ancilla qubits are used. The image of $\perp_1$ is still orthogonal to all states of the form $\ket{\z}\ket{0^{b_1}}$ since the $b_1$ ancilla qubits used in the previous step are unchanged. Therefore, the final state is 
\begin{equation}\label{eq:state2}
\ket{\phi_2}\ket{0^{b_1+b_2}}+\perp,
\end{equation}
where $\perp$ is an unnormalized state orthogonal to all states of the form $\ket{\x}\ket{0^{b_1+b_2}}$ and $\ket{\phi_2}$ satisfies
\begin{equation}\label{eq:a}
\begin{aligned}
&\left\|C_\alpha C_\beta\ket{\phi_2}-\frac{1}{|f|}\sum_{\x,\z}\ba(\x)\bb(\z)e^{\frac{2\pi i \x\cdot\z}{P}}\hat{f}(\z)\ket{\x}\right\|\\
\leq&C_\alpha C_\beta\left\|\ket{\phi_2}-\frac{1}{C_\alpha}{D}_\ba{U_{\text{FT}}}^{\otimes d}\ket{\phi_1}\right\|+C_\beta\left\|D_\ba{U_{\text{FT}}}^{\otimes d}\ket{\phi_1}-\frac{1}{C_\beta|f|}\sum_{\x,\z}\ba(\x)\bb(\z)e^{\frac{2\pi i \x\cdot\z}{P}}\hat{f}(\z)\ket{\x}\right\|\\
<& \frac{\epsilon}{2} + {C_\beta}\left\|D_\ba{U_{\text{FT}}}^{\otimes d}\ket{\phi_1}-D_\ba\frac{1}{C_\beta|f|}\sum_{\x,\z}\bb(\z)e^{2\pi i \x\cdot\z/P}\hat{f}(\z)\ket{\x}\right\|\\
\leq& \frac{\epsilon}{2} + C_\beta\sup |\ba|\left\|{U_{\text{FT}}}^{\otimes d}\ket{\phi_1}-\frac{1}{C_\beta|f|}\sum_{\x,\z}\bb(\z)e^{2\pi i \x\cdot\z/P}\hat{f}(\z)\ket{\x}\right\|<\frac{\epsilon}{2} + \frac{\epsilon}{2} =\epsilon,
\end{aligned}
\end{equation}
where we have used the inequality \eqref{eq:phib} and the fact that $C_\alpha \geq \sup|\alpha| = \sup|\ba|$ in the last line. 
By checking the definition of block encoding and adding up the gates and ancilla qubits used, we obtain the following theorem.

\begin{theorem}\label{thm:sep}
If $a(\x,\z)=\alpha(\x)\beta(\z)$ is a separable symbol as defined in \Cref{def:sep}, then the discretized PDO \eqref{eq:pdoZsep} can be $({C_\alpha C_\beta}, \mo(\poly(pd)+\polylog(1/\epsilon)), \epsilon)$-block-encoded using the circuit displayed in \Cref{fig:PDO2} with gate complexity $\mo(\poly(pd)+\polylog(1/\epsilon))$, where $C_\alpha, C_\beta > 0$ are constants such that $C_\alpha\geq\sup|\alpha|$ and $C_\beta\geq\sup|\beta|$. 
\end{theorem}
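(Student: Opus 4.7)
The plan is to track the state through the four stages of the circuit in \Cref{fig:PDO2}, namely (i) inverse QFT on the $d$ space registers, (ii) the block encoding $U_\bb$ of $D_\bb$, (iii) a forward QFT, and (iv) the block encoding $U_\ba$ of $D_\ba$, and then to invoke \Cref{prop:diag1} for the two diagonal multiplications with appropriately chosen target precisions $\frac{\epsilon}{2C_\alpha}$ and $\frac{\epsilon}{2C_\beta}$.

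First I would apply $({U_{\text{FT}}^\dag})^{\otimes d}$ to the input $\frac{1}{|f|}\sum_\x f(\xp)\ket{\x}$ and use the QFT identity \eqref{eq:QFT} to obtain $\frac{\sqrt{P^d}}{|f|}\sum_\z \hat{f}(\z)\ket{\z}$. Applying \Cref{prop:diag1} with $m=d$, $g=\bb$, and target precision $\frac{\epsilon}{2C_\alpha}$ produces an output of the form $\ket{\phi_1}\ket{0^{b_1}}+\perp_1$ satisfying the bound \eqref{eq:b}, with $\perp_1$ supported outside the $\ket{0^{b_1}}$ ancilla subspace. Because the forward QFT ${U_{\text{FT}}}^{\otimes d}$ is unitary and acts trivially on the $b_1$ ancillas, applying it preserves both the orthogonality of $\perp_1$ and the bound on $\ket{\phi_1}$, yielding the estimate \eqref{eq:phib} comparing ${U_{\text{FT}}}^{\otimes d}\ket{\phi_1}$ with $\frac{1}{C_\beta|f|}\sum_{\x,\z}\bb(\z)e^{2\pi i\x\cdot\z/P}\hat{f}(\z)\ket{\x}$.

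Next I would invoke \Cref{prop:diag1} a second time with $m=d$, $g=\ba$ and target precision $\frac{\epsilon}{2C_\beta}$ to produce $\ket{\phi_2}\ket{0^{b_1+b_2}}+\perp$, where $\perp$ is orthogonal to every state of the form $\ket{\x}\ket{0^{b_1+b_2}}$. The crucial algebraic step is a two-term triangle inequality as in \eqref{eq:a}: I split the distance between $C_\alpha C_\beta\ket{\phi_2}$ and the target vector into (a) the error introduced by $U_\ba$, bounded by $C_\alpha C_\beta\cdot\frac{\epsilon}{2C_\alpha C_\beta}=\frac{\epsilon}{2}$, and (b) the error propagated from step (iii), which after pulling $D_\ba$ out of the norm is bounded by $C_\beta\sup|\ba|$ times the left-hand side of \eqref{eq:phib}, hence by $\frac{\epsilon}{2}$ using $\sup|\ba|=\sup|\alpha|\leq C_\alpha$. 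Summing gives total error at most $\epsilon$, matching the definition of a $(C_\alpha C_\beta,\,\cdot\,,\epsilon)$-block-encoding.

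Finally I would carry out the resource count, which is additive: the two QFT layers contribute $\mo(p^2 d)$ gates each, while the two invocations of \Cref{prop:diag1} contribute $\mo(\poly(pd)+\polylog(1/\epsilon))$ gates and that many ancillas; the extra $\log C_\alpha$ and $\log C_\beta$ factors in the chosen precisions sit inside a $\polylog$ and do not change the order. The main obstacle is simply organizing the two-level error propagation so that the precisions of $U_\bb$ and $U_\ba$ combine to the advertised $\epsilon$; once the scaling factors $\frac{1}{2C_\alpha}$ and $\frac{1}{2C_\beta}$ are baked in at the start, the rest is a direct application of unitarity of the QFT and the operator-norm bound $\|D_\ba\|\leq C_\alpha$.
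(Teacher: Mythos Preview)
Your proposal is correct and follows essentially the same approach as the paper: apply \Cref{prop:diag1} twice with precisions $\frac{\epsilon}{2C_\alpha}$ and $\frac{\epsilon}{2C_\beta}$, use unitarity of the QFT to transport the first error bound, and combine via the triangle inequality exactly as in \eqref{eq:a}. The resource accounting also matches.
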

\begin{remark}
In contrast to the result in \Cref{thm:generica}, one can observe that the exponential factor $2^{\frac{pd}{2}}$ is removed, and thus the success probability for the circuit in \Cref{fig:PDO2} is improved exponentially compared to the one in \Cref{fig:PDO1}. 
\end{remark}

\subsection{Linear combination of separable terms}
With the block encoding for PDOs with separable symbols ready, the PDO for a linear combination of separable terms, i.e.,
\[
a(\x,\z) = \sum_{j=0}^{m-1} y_ja_j(\x,\z) =\sum_{j=0}^{m-1} y_j\alpha_j(\x)\beta_j(\z),
\]
can also be block-encoded, thanks to LCU (see \Cref{subsec:LCU}). More precisely, we have the following corollary. 
\begin{cor}\label{cor:sep}
    For a linear combination of separable symbols $a(\x,\z) = \sum_{j=0}^{m-1} y_ja_j(\x,\z) =\sum_{j=0}^{m-1} y_j\alpha_j(\x)\beta_j(\z)$, where $\sup|\alpha_j|\leq1$, $\sup|\beta_j|\leq1$ and $y\in\mathbb{C}^m$ with $\|y\|_1\leq\delta$, assume that $(U_L, U_R)$ is a pair of unitaries described in \Cref{lem:LCU} with $\epsilon_1=\epsilon$, and $W=\sum_{j=0}^{m-1}\ket{j}\bra{j}\otimes U_j+\sum_{j=m}^{2^b-1}\ket{j}\bra{j}\otimes I_{a+s}$, where each $U_j$ is a $(1,a, \epsilon)$-block-encoding of the discretized PDO $A_j$ associated with symbol $a_j(\x,\z)$ (see \eqref{eq:pdoZsep}) constructed in \Cref{thm:sep} with $a=\mo(\poly(pd)+\polylog(1/\epsilon))$. Then $(U_L^\dag\otimes I_{a+s})W(U_R\otimes I_{a+s})$ is a $(\delta, a+b, (1+\delta)\epsilon)$-block-encoding of $\sum_{j=0}^{m-1}y_jA_j$, where $I_{a+s}$ denotes the identity operator with size $2^{a+s}\times2^{a+s}$. The gate complexity of the corresponding circuit is $\mo(m(\poly(pd)+\polylog(1/\epsilon)))$. 
\end{cor}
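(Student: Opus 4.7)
The claim is essentially a direct instantiation of \Cref{lem:LCU}, so the strategy is to match parameters and then verify that the discretized operator $\sum_j y_j A_j$ from the linear combination is indeed the PDO associated with the symbol $\sum_j y_j \alpha_j(\x)\beta_j(\z)$. The plan is to invoke \Cref{lem:LCU} with the identifications $\alpha = 1$ (since each $U_j$ has block-encoding factor $1$), $\beta = \delta$ (from $\|y\|_1 \le \delta$), $\epsilon_1 = \epsilon$ (error from the state preparation pair $(U_L, U_R)$), and $\epsilon_2 = \epsilon$ (error from the individual block encodings $U_j$ supplied by \Cref{thm:sep}).

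First I would check that the hypotheses of \Cref{lem:LCU} are met: the pair $(U_L, U_R)$ supplies the coefficients $y_j$ up to $\ell_1$-error $\epsilon$, and $W$ is the standard select unitary consisting of controlled applications of the $U_j$, each being a $(1, a, \epsilon)$-block-encoding of $A_j$ per \Cref{thm:sep} (normalized so that $\sup|\alpha_j|, \sup|\beta_j|\le 1$, hence $C_{\alpha_j}C_{\beta_j}=1$). Substituting into the conclusion of \Cref{lem:LCU} gives that $(U_L^\dag \otimes I_{a+s})W(U_R \otimes I_{a+s})$ is an
\begin{equation*}
\bigl(\alpha\beta,\; a+b,\; \alpha\epsilon_1 + \beta\epsilon_2\bigr) = \bigl(\delta,\; a+b,\; (1+\delta)\epsilon\bigr)
\end{equation*}
block-encoding of $\sum_{j=0}^{m-1} y_j A_j$, which matches the target parameters.

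Next I would verify that $\sum_{j=0}^{m-1} y_j A_j$ coincides with the discretized PDO associated with the symbol $a(\x, \z) = \sum_j y_j \alpha_j(\x)\beta_j(\z)$. This is immediate from the linearity of the definition \eqref{eq:pdoZsep} in the symbol: applying $\sum_j y_j A_j$ to $f$ and expanding gives $\sum_{\z} e^{2\pi i \x\cdot\z/P}\bigl(\sum_j y_j \a_j(\x)\bb_j(\z)\bigr)\hat{f}(\z)$, which is exactly the discretization of the combined symbol. Finally, the gate count follows by summing contributions: each of the $m$ block encodings $U_j$ contributes $\mo(\poly(pd) + \polylog(1/\epsilon))$ gates by \Cref{thm:sep}, the select unitary $W$ costs $\mo(m)$ times this, and the state preparation unitaries $U_L, U_R$ act on $b = \mo(\log m)$ qubits and contribute lower-order overhead, yielding the stated total $\mo\bigl(m(\poly(pd) + \polylog(1/\epsilon))\bigr)$.

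There is no real obstacle here; the only bookkeeping worth flagging is the error propagation, namely confirming that the factor $(1+\delta)$ arises exactly from $\alpha\epsilon_1 + \beta\epsilon_2 = \epsilon + \delta\epsilon$, and that the normalization $\delta$ (rather than, say, $\|y\|_1$ itself) is the correct block-encoding scale since the hypothesis only assumes $\|y\|_1 \le \delta$. Thus the proof reduces to this parameter-matching argument plus the linearity observation.
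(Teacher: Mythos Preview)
Your proposal is correct and matches the paper's approach: the corollary is stated without an explicit proof precisely because it is the direct instantiation of \Cref{lem:LCU} with $\alpha=1$, $\beta=\delta$, $\epsilon_1=\epsilon_2=\epsilon$ applied to the block encodings $U_j$ supplied by \Cref{thm:sep}. Your parameter matching, linearity check, and gate-count tally are exactly the intended argument.
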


\section{Efficient block encoding for fully separable symbols with explicit circuits}\label{sec:fully sep}

For separable symbols, the circuit presented in \Cref{fig:PDO2} significantly increases the success probability compared to the one in \Cref{fig:PDO1}. However, this relies on circuits for arithmetic functions (see \Cref{prop:diag1}), which can still be challenging to construct in practice. In this section, we develop a more explicit circuit with the help of QSP and QET (see \Cref{subsec:QET}). 

\subsection{Dimension-wise fully separable symbols}
To begin with, we consider the fully separable symbols defined as follows. 
\begin{definition}\label{def:fullysep}
  A symbol $a(\x,\z)$ is called \emph{fully separable} if $a(\x,\z) = \alpha(\x)\beta(\z)$ with $\alpha(\x) = \alpha_1(x_1)\cdots\alpha_d(x_d)$ and $\beta(\z) = \beta_1(\xi_1)\cdots\beta_d(\xi_d)$ where each function in the set $\{\alpha_k\}_{k=1}^d\cup\{\beta_k\}_{k=1}^d$ is a real even function, a real odd function or an exponential function of the form $f(y)=\exp(i\theta y)$ for some real parameter $\theta$.
\end{definition}

Similar with \eqref{eq:sepnote}, we introduce the following notations:
\begin{equation}\label{eq:fullynote}
  \ba_k(x_k) = \alpha_k(\xkp), ~
  \bb_k(\xi_k) = \begin{cases}
    \beta_k(\xi_k),   & 0\leq\xi_k<P/2\\
    \beta_k(\xi_k-P), & P/2\leq\xi_k<P
  \end{cases},\quad k= 1, 2, \ldots, d.
\end{equation}
Then the discretized PDO \eqref{eq:pdoZ} becomes 
\begin{equation}\label{eq:pdoZfully}
A f(\x) = \sum_{\z \in \Z} e^{2 \pi i  \x\cdot \z/P} \left(\prod_{k=1}^d\ba_k(x_k)\right)\left(\prod_{k=1}^d\bb_k(\xi_k)\right) \hat{f}(\z), \quad \x\in \Z=\{0,1, \ldots, P-1\}^d.
\end{equation}
In order to block-encode the PDO \eqref{eq:pdoZfully}, we adopt the following circuit, as shown in \Cref{fig:PDO3}, which is similar with the one in \Cref{fig:PDO2}: 
\begin{figure}[ht]
\centerline{                                                                  
\Qcircuit @C=1.5em @R=1.5em {
    \lstick{\ket{0^{2d}}} & \qw&\qw& \qw&\multigate{1}{\bigotimes_{k=1}^d U_{\ba_k}}\qw &\qw&\meter \\
    \lstick{\frac{1}{|f|}\sum_{\x}f(\xp)\ket{\x}} & \gate{{U_{\text{FT}}^\dag}^{\otimes d}}&\multigate{1}{\bigotimes_{k=1}^d U_{\bb_k}}& \gate{{U_{\text{FT}}}^{\otimes d}} &\ghost{\bigotimes_{k=1}^d U_{\ba_k}}&\qw&&\frac{1}{|Af|}\sum_{\x}Af(\xp)\ket{\x} \\
    \lstick{\ket{0^{2d}}} & \qw& \ghost{\bigotimes_{k=1}^d U_{\bb_k}}&\qw &\qw\qw&\qw&\meter \\
}
}    
\caption{Circuit for an explicit and efficient block encoding for fully separable PDOs \eqref{eq:pdoZfully}. 
Here $\frac{1}{|f|}\sum_{\x}f(\xp)\ket{\x}$ is the normalized input data, $U_{\text{FT}}$ is the QFT circuit, $U_{\bb_k}$ and $U_{\ba_k}$ are the block encodings of $D_{\bb_k}$ and $D_{\ba_k}$, respectively, and the desired output is obtained with normalizing factor $\frac{1}{|Af|}$ when getting $\ket{0^{b_1+b_2}}$ for the $b_1+b_2$ ancilla qubits. Here $D_{\bb_k}$ and $D_{\ba_k}$ are diagonal matrices defined in \eqref{eq:diagg}.}\label{fig:PDO3}
\end{figure}

Exploiting the fully separable structure of the symbol, one can construct explicit circuits for the diagonal multiplications shown in \Cref{fig:PDO3} by leveraging QSP and QET (see \Cref{subsec:QET}). To this end, we first introduce a lemma that gives Hermitian block encodings for two diagonal multiplication prototypes that allow us to build the QET circuit afterward. More specifically, by combining a series of single-qubit rotations, one can construct a diagonal matrix with diagonal elements $\{\exp(ij\theta)\}_{j=0}^{P-1}$. Then one can build a diagonal matrix with diagonal elements $\{\sin(ij\theta)\}_{j=0}^{P-1}$ using a simple LCU circuit, from which a diagonal matrix with diagonal elements $\{g(ij\theta)\}_{j=0}^{P-1}$ can be built with QET for some smooth function $g$. Since the grid points of the frequency variables are taken as $\{-\frac{P}{2}, -\frac{P}{2}+1, \ldots, \frac{P}{2}-1\}$, one also needs the corresponding diagonal matrices where the index $j$ takes values in $\{-\frac{P}{2}, -\frac{P}{2}+1, \ldots, \frac{P}{2}-1\}$ rather than from $0$ to $P$. We denote the corresponding matrices by subscript $-$ as opposed to $+$ if the index $j$ goes from $0$ to $P$. During the preparation of this paper, we notice that a similar result is built in \cite{mcardle2022quantum} independently. 

\begin{lemma}\label{lem:diagv}
For a fixed $P=2^p$, let $\bm{v}_-$, $\bm{v}_+$ be the vectors:
$$\bm{v}_- = \left(0,1,\ldots,\frac{P}{2}-1,-\frac{P}{2},-\frac{P}{2}+1,\ldots,-1\right),$$ 
and
$$\bm{v}_+ = \left(0,1,\ldots,\frac{P}{2}-1,\frac{P}{2},\frac{P}{2}+1,\ldots,P-1\right),$$ 
respectively, and $D_-$, $D_+$ be the diagonal matrices $\diag\!(\bm{v}_-)$, $\diag\!(\bm{v}_+)$, respectively. Then there is an $(1, 1)$-Hermitian-block-encoding of $\sin(\theta D_\sigma)$ with gate complexity $2p+5$, where $\sigma\in\{-,+\}$ and $\theta>0$ is a parameter. 
\end{lemma}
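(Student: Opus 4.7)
The plan is to set $U = (X_a \otimes I)\, e^{-i\theta\, Y_a\otimes D_\sigma}$, where $X_a, Y_a, Z_a$ are Pauli operators on the single ancilla qubit and $D_\sigma$ acts on the $p$-qubit data register, and to verify that $U$ has all the required properties. Since $(Y_a\otimes D_\sigma)^2 = I\otimes D_\sigma^2$, expanding the exponential gives
\begin{equation*}
e^{-i\theta Y_a\otimes D_\sigma} = I\otimes\cos(\theta D_\sigma) - i Y_a\otimes\sin(\theta D_\sigma),
\end{equation*}
and multiplying on the left by $X_a\otimes I$, together with $X_aY_a = iZ_a$, yields
\begin{equation*}
U = X_a\otimes\cos(\theta D_\sigma) + Z_a\otimes\sin(\theta D_\sigma) = \begin{pmatrix}\sin(\theta D_\sigma) & \cos(\theta D_\sigma) \\ \cos(\theta D_\sigma) & -\sin(\theta D_\sigma)\end{pmatrix}.
\end{equation*}
All four blocks are real diagonal, so $U$ is Hermitian; its $(0,0)$ block is exactly $\sin(\theta D_\sigma)$; and $U^2 = I\otimes(\cos^2(\theta D_\sigma)+\sin^2(\theta D_\sigma)) = I$ via $\{X_a,Z_a\}=0$, confirming unitarity. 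Thus $U$ is a $(1,1,0)$-Hermitian-block-encoding of $\sin(\theta D_\sigma)$.

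Next I would make the exponential explicit. Writing the bit projectors $\Pi_j := (\ket{1}\bra{1})_j$, the binary representation gives $D_+ = \sum_{j=0}^{p-1} 2^j \Pi_j$, while the two's-complement formula gives $D_- = \sum_{j=0}^{p-2} 2^j \Pi_j - 2^{p-1}\Pi_{p-1}$. Because the $\Pi_j$ pairwise commute and $(Y_a\otimes\Pi_j)^2 = I\otimes\Pi_j$, the exponential factorizes as a product of $p$ controlled-$R_y$ rotations on the ancilla, with control on data qubit $j$ and angle $2\theta\, s_j^\sigma\, 2^j$ (where $s_j^+ = 1$, and $s_j^- = 1$ for $j<p-1$ with $s_{p-1}^- = -1$). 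A final $X_a$ completes the circuit.

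For the gate count I would conjugate the ancilla by $HS^\dag$ (and its inverse $SH$) so that each $R_y$ becomes an $R_z$; the inner pairs of basis changes between adjacent controlled rotations cancel, leaving only two gates at each end. Using $\Pi_j = (I - Z_j)/2$, each controlled-$R_z$ splits into a single-qubit $R_z$ on the ancilla (which commutes with all subsequent operators and merges into one global $R_z$) plus a $Z_aZ_j$ rotation compiling to two CNOTs and an $R_z$. Careful compilation bounds the total by $2p+5$ elementary gates; the $D_-$ case is identical up to a sign in the $(p-1)$-st angle.

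The main technical obstacle is achieving the exact bound $2p+5$ rather than the straightforward $4p+\mo(1)$ from decomposing each controlled-$R_y$ in isolation. The savings come from moving to the ancilla $Z$-basis (so that all single-qubit ancilla rotations commute and can be merged into one) and from exploiting that consecutive $ZZ$-rotations share the ancilla, allowing their CNOT structure to be interleaved efficiently. The algebraic content of the lemma lives in the first two steps; the last is a standard circuit-compilation exercise.
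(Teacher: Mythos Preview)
Your construction is correct and in fact produces exactly the same unitary the paper builds: the paper's circuit (ancilla gates $S,H$, then open/closed controlled $R_\sigma,R_\sigma^\dag$, then $H,X,S$) evaluates to $i\begin{pmatrix}\sin(\theta D_\sigma)&\cos(\theta D_\sigma)\\ \cos(\theta D_\sigma)&-\sin(\theta D_\sigma)\end{pmatrix}$, which is your $U=(X_a\otimes I)e^{-i\theta Y_a\otimes D_\sigma}$ up to a global phase. The algebraic verification you give (Hermitian, unitary, correct $(0,0)$ block) is clean and matches the paper's direct matrix computation.

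The route differs slightly. The paper first observes that $R_\sigma:=e^{i\theta D_\sigma}$ is a tensor product of $p$ single-qubit $R_z$'s on the data register (via the binary expansion of $D_\sigma$), and then uses an LCU-style ancilla sandwich to combine $R_\sigma$ and $R_\sigma^\dag$; the $2p+5$ count is immediate: two controlled copies of a $p$-gate diagonal plus five single-qubit ancilla gates, with controlled-$R_z$ taken as elementary (consistent with the QFT convention used elsewhere in the paper). Your route instead factorizes $e^{-i\theta Y_a\otimes D_\sigma}$ directly into $p$ data-controlled $R_y$'s on the ancilla, which is arguably simpler and, under the same convention, already gives $p+1$ gates without any further work.

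The only soft spot is your final gate-count paragraph. If controlled rotations are elementary (as the paper assumes), your CNOT-level decomposition is unnecessary; and if one insists on a CNOT$+$single-qubit gate set, the sketch you give (each $Z_aZ_j$ rotation as two CNOTs and an $R_z$, plus merged ancilla $R_z$'s) yields $3p+\mo(1)$, not $2p+5$, absent additional cancellations you have not spelled out. This does not affect the correctness of the block encoding, only the sharpness of the constant, and it is easily fixed by adopting the paper's convention and simply counting $p$ controlled rotations plus the ancilla $X$.
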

\begin{proof}
For an arbitrary parameter $\theta>0$, we first construct matrices $R_- = \exp( i\theta D_-)$ and $R_+ = \exp( i\theta D_+)$ with quantum circuits. Using the binary representation of $\xi$, we get
\[
R_-\ket{\xi_{p-1}\cdots\xi_0} = e^{i((-\xi_{p-1})\xi_{p-2}\cdots\xi_0.)\theta}\ket{\xi_{p-1}\cdots\xi_0},
\]
and
\[
R_+\ket{\xi_{p-1}\cdots\xi_0} = e^{i(\xi_{p-1}\cdots\xi_0.)\theta}\ket{\xi_{p-1}\cdots\xi_0}.
\]
Then 
\[
\begin{aligned}
R_+\ket{\xi_{p-1}\cdots\xi_0} &= e^{i(\xi_{p-1}\cdots\xi_0.)\theta}\ket{\xi_{p-1}\cdots\xi_0}=e^{i\sum_j\xi_j2^j\theta}\ket{\xi_{p-1}\cdots\xi_0}\\
& = \prod_je^{i\xi_j2^j\theta}\bigotimes_j\ket{\xi_j} = \bigotimes_je^{i\xi_j2^j\theta}\ket{\xi_j} =  \bigotimes_jR_z(2^j\theta)\ket{\xi_j},
\end{aligned}
\]
and
\[
R_-\ket{\xi_{p-1}\cdots\xi_0} = e^{-i\xi_{p-1}2^p\theta}R_+\ket{\xi_{p-1}\cdots\xi_0} = \bigotimes_jR_z((-1)^{\delta_{j, p-1}}2^j\theta)\ket{\xi_j},
\]
so $R_+ = \bigotimes_jR_z(2^j\theta)$ and $R_- = \bigotimes_jR_z((-1)^{\delta_{j, p-1}}2^j\theta)$, where $R_z$ is the single qubit rotation defined in \Cref{subsec:note}. Let $U_\sigma$ be the circuit displayed in \Cref{fig:D}, where $\sigma\in\{-,+\}$, then $U_\sigma$ is a $(1, 1)$-Hermitian-block-encoding of $\sin(\theta D_\sigma)$. 
\begin{figure}[ht]
\centerline{                                                                  
\Qcircuit @C=1em @R=1.5em {
\lstick{\ket{0}}&\gate{S} & \gate{H} & \ctrlo{1} & \ctrl{1} & \gate{H}& \gate{X}& \gate{S} &\meter  \\
\lstick{\ket{\psi}}&\qw & \qw & \gate{R_\sigma} & \gate{R_\sigma^\dag} & \qw& \qw& \qw &\qw &&\sin(\theta D_\sigma)\ket{\psi} \\
}
}    
\caption{Hermitian block encoding of $\sin(\theta D_\sigma)$.} \label{fig:D}     
\end{figure}
In fact, the matrix corresponding to $U_\sigma$ is
\[
\begin{aligned}
U_\sigma &= \begin{bmatrix}I&\\&iI\end{bmatrix}\begin{bmatrix}&I\\ I&\end{bmatrix}\frac{1}{\sqrt{2}}\begin{bmatrix}I&I\\ I&-I\end{bmatrix}\begin{bmatrix}R_\sigma&\\&R_\sigma^\dag\end{bmatrix}\frac{1}{\sqrt{2}}\begin{bmatrix}I&I\\ I&-I\end{bmatrix}\begin{bmatrix}I&\\&iI\end{bmatrix}\\
&= \begin{bmatrix}I&\\&iI\end{bmatrix}\begin{bmatrix}&I\\ I&\end{bmatrix}\frac{1}{2}\begin{bmatrix}R_\sigma+R_\sigma^\dag&R_\sigma-R_\sigma^\dag\\R_\sigma-R_\sigma^\dag&R_\sigma+R_\sigma^\dag\end{bmatrix}\begin{bmatrix}I&\\&iI\end{bmatrix}\\
&= \frac{1}{2}\begin{bmatrix}R_\sigma-R_\sigma^\dag&i(R_\sigma+R_\sigma^\dag)\\i(R_\sigma+R_\sigma^\dag)&-(R_\sigma-R_\sigma^\dag)\end{bmatrix} = i\begin{bmatrix}\sin(\theta D_\sigma)&\cos(\theta D_\sigma)\\\cos(\theta D_\sigma)&-\sin(\theta D_\sigma)\end{bmatrix},\\
\end{aligned}
\]
and this is a Hermitian matrix with the first diagonal block being $\sin(\theta D_\sigma)$ after ignoring the global phase factor $i$, since $\begin{bmatrix}\sin(\theta D_\sigma)&\cos(\theta D_\sigma)\\\cos(\theta D_\sigma)&-\sin(\theta D_\sigma)\end{bmatrix}$ is Hermitian. It can be seen from \Cref{fig:D} that the number of elementary gates used is $2p+5$. 
\end{proof}

Now, we aim to construct the block encoding of diagonal matrices $D_{\ba_j}=\ba_j(D_+)$ and $D_{\bb_j}=\bb_j(D_+)=\beta_j(D_-)$, namely $U_{\ba_k}$ and $U_{\bb_k}$ in \Cref{fig:PDO3}. For the case where $\alpha_j(x_j)=\exp(i\theta x_j)$ or $\beta_j(\xi_j)=\exp(i\theta \xi_j)$, the matrices $R_- = \exp( i\theta D_-)$ and $R_+ = \exp( i\theta D_+)$ constructed in \Cref{lem:diagv} are exactly the diagonal matrices $D_{\bb_j}=\beta_j(D_-)$ and $D_{\ba_j}=\ba_j(D_+)$, respectively. Therefore, we devote the rest of this section to the case where $\beta_j$ and $\alpha_j$ are even or odd real functions. Thanks to the block encodings $U_+$ and $U_-$ introduced in \Cref{lem:diagv}, what remains to do is to find polynomial approximations of $\ba$ and $\beta$ so as to complete the QET procedure described in \Cref{subsec:QET}. Specifically, we restrict the parameter $\theta$ to be $0<\theta<\frac{\pi}{2P}$ in order to recover $\theta D_\sigma$ from $\sin(\theta D_\sigma)$ with the arcsin function. Going through the QET procedure, one obtains the following result. 

\begin{prop}\label{prop:QET}
Let $U_-$ and $U_+$ be the $(1, 1)$-Hermitian-block-encodings of $\sin(\theta D_-)$ and $\sin(\theta D_+)$ constructed in \Cref{lem:diagv} for $0<\theta<\frac{\pi}{2P}$ with $P=2^p$. Assume that $g$ is an even (resp. odd) continuous real function on $[-P, P]$ and that $\deg_g(\epsilon)$ is the smallest positive integer such that there exists an even (resp. odd) polynomial $\tilde{g}$ with the degree bounded by $\deg_g(\epsilon)$ satisfying 
\begin{equation}\label{eq:qspasp}
\sup_{-\sin(P\theta)\leq x\leq \sin(P\theta)}\left|C_g\tilde{g}(\x)-g\left(\frac{1}{\theta}\arcsin x\right)\right|<\epsilon, ~ \|\tilde{g}\|\leq 1,
\end{equation}
where $\|\cdot\|$ is the $L^\infty$ norm on $[-1,1]$ and $C_g$ is a constant such that $C_g\geq\sup|g|$. 
Then there is a $(C_g, 2, \epsilon)$-block-encoding for both $g(D_-)$ and $g(D_+)$ with $\mo(p\deg_g(\epsilon))$ gates, where $D_-$ and $D_+$ are defined in \Cref{lem:diagv}.
\end{prop}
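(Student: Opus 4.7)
\textbf{Proof proposal for Proposition \ref{prop:QET}.} The plan is a direct application of QET (\Cref{lem:QET}) to the Hermitian block encoding $U_\sigma$ of $\sin(\theta D_\sigma)$ from \Cref{lem:diagv}, with the polynomial $\tilde g$ guaranteed by the hypothesis \eqref{eq:qspasp}. The reason this works is that on the discrete spectrum of $D_\sigma$ we can invert $\sin(\theta\,\cdot)$ cleanly: because $0<\theta<\pi/(2P)$, every eigenvalue $v$ of $D_\sigma$ satisfies $\theta v\in(-\pi/2,\pi/2)$, so $\frac{1}{\theta}\arcsin(\sin(\theta v))=v$ exactly, and hence the function $h(x):=g\bigl(\frac{1}{\theta}\arcsin x\bigr)$ on $[-\sin(P\theta),\sin(P\theta)]$ satisfies $h(\sin(\theta D_\sigma))=g(D_\sigma)$.

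First I would verify that the polynomial approximation $\tilde g$ from \eqref{eq:qspasp} is suitable for QET: $h$ inherits the parity of $g$ because $\arcsin$ is odd, and the hypothesis already enforces the parity constraint on $\tilde g$, the bound $\|\tilde g\|_{L^\infty[-1,1]}\le 1$, and the approximation error $\sup_{[-\sin(P\theta),\sin(P\theta)]}|C_g\tilde g(x)-h(x)|<\epsilon$ on the interval that contains the spectrum of $\sin(\theta D_\sigma)$. Next, I would feed $U_\sigma$ into the QET circuit of \Cref{fig:QET} using the QSP phase factors associated with $\tilde g$ via \eqref{eq:QSP}; since $U_\sigma$ is a $(1,1)$-Hermitian block encoding, this produces a $(1,2)$-block encoding $V_\sigma$ whose top-left block is exactly $\tilde g(\sin(\theta D_\sigma))$.

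Finally, I would identify $V_\sigma$ as the claimed $(C_g,2,\epsilon)$-block encoding of $g(D_\sigma)$. Since $D_\sigma$ is diagonal, so is $\sin(\theta D_\sigma)$, and the operator-norm error can be evaluated eigenvalue-by-eigenvalue:
\begin{equation*}
\bigl\|g(D_\sigma)-C_g\,\tilde g(\sin(\theta D_\sigma))\bigr\|
=\max_{v\in\bm v_\sigma}\bigl|g(v)-C_g\,\tilde g(\sin(\theta v))\bigr|
=\max_{v\in\bm v_\sigma}\bigl|h(\sin(\theta v))-C_g\,\tilde g(\sin(\theta v))\bigr|<\epsilon,
\end{equation*}
by \eqref{eq:qspasp}. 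The gate count then follows from \Cref{lem:QET}: QET makes $\mo(\deg_g(\epsilon))$ calls to $U_\sigma$, and by \Cref{lem:diagv} each $U_\sigma$ uses $2p+5=\mo(p)$ gates, giving the overall complexity $\mo(p\deg_g(\epsilon))$.

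The main obstacle is essentially bookkeeping rather than a genuine difficulty: one must be careful that the domain $[-\sin(P\theta),\sin(P\theta)]$ on which $h$ is defined contains the spectrum of $\sin(\theta D_\sigma)$ (which the restriction on $\theta$ secures), and that QET's requirement of approximation on all of $[-1,1]$ is effectively harmless because the relevant block-encoding error is controlled only by the values of $\tilde g$ on that spectrum. A minor parity check—that both branches $\sigma\in\{-,+\}$ give eigenvalues symmetric enough to admit an even/odd approximation—is automatic since $g$ (and hence $h$) is assumed even or odd, and the hypothesis is imposed uniformly on both sign choices via the interval $[-\sin(P\theta),\sin(P\theta)]$.
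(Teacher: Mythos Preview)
Your proposal is correct and follows essentially the same route as the paper's proof: apply the QET circuit of \Cref{fig:QET} with $U_A=U_\sigma$ and the phase factors corresponding to $\tilde g$, observe that $\frac{1}{\theta}\arcsin(\sin(\theta D_\sigma))=D_\sigma$ because $0<\theta<\pi/(2P)$, and read off the $(C_g,2,\epsilon)$-block-encoding and the $\mo(p\deg_g(\epsilon))$ gate count. Your write-up is somewhat more explicit about the parity and spectrum-containment checks, but there is no substantive difference in method.
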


\begin{proof}
The circuit in \Cref{fig:QET} with $U_A$ replaced by $U_\sigma$ gives a $(C_g, 2)$-block-encoding for $\tilde{g}(\sin(\theta D_\sigma))$, and since 
\[
\sup_{-\sin(P\theta)\leq x\leq \sin(P\theta)}\left|C_g\tilde{g}(\x)-g\left(\frac{1}{\theta}\arcsin x\right)\right|<\epsilon,
\]
we have 
\[
\left\|C_g\tilde{g} (\sin(\theta D_\sigma))-g\left(\frac{1}{\theta}\arcsin (\sin(\theta D_\sigma))\right)\right\|<\epsilon,
\]
where the operator $2$-norm is used.
Thus the circuit in \Cref{fig:QET} with $U_A$ replaced by $U_\sigma$ gives a $(C_g, 2, \epsilon)$-block-encoding for
\[
g\left(\frac{1}{\theta}\arcsin (\sin(\theta D_\sigma))\right) = g\left(\frac{1}{\theta}\cdot\theta D_\sigma\right) = g(D_\sigma),
\]
where $\sigma\in\{-,+\}$ and we have used the fact that $0<\theta<\frac{\pi}{2P}$ in the first equality. Since $\mo(p)$ gates are used in $U_\sigma$, the gate complexity of the circuit described above is $\mo(p\deg_g(\epsilon))$, which closes the proof. 

\end{proof}

The gate complexity of the circuit built with QET in \Cref{prop:QET} depends on the smoothness of $g$. For instance, we have the following corollary. 
\begin{cor}\label{cor:analytic}
Assume that $g$ is an even (resp. odd) differentiable real function on $[-\frac{\pi}{2\theta},\frac{\pi}{2\theta}]$. Let $\dg{g}{\epsilon}{\theta}$ be the smallest integer such that there exists a polynomial $u$ with degree $\dg{g}{\epsilon}{\theta}$ satisfying $\sup_{|y|<\frac{\pi}{2\theta}}|u(y)-g(y)|<\frac{\epsilon}{3}$, then the gate complexity of the circuit used in \Cref{prop:QET} is 
$$\mo\left(p\log\left(\frac{C_g'}{\theta\epsilon}\right)\dg{g}{\epsilon}{\theta}\right),$$
where $C_g' = \sup_{|y|<\frac{\pi}{2\theta}}|g'(y)|$. In particular, if $g$ is a polynomial, the gate complexity reduces to $\mo\left(p\log\left(\frac{C_g'}{\theta\epsilon}\right)\deg g\right)$.
\end{cor}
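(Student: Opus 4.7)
The plan is to produce an even (resp. odd) polynomial $\tilde g$ on $[-1,1]$ with $\|\tilde g\|\le 1$ and degree $\mo(\log(C_g'/(\theta\epsilon))\dg{g}{\epsilon}{\theta})$ that satisfies \Cref{eq:qspasp}; \Cref{prop:QET} then converts this into the claimed gate count, since each copy of $U_\sigma$ in the QET circuit contributes $\mo(p)$ gates. The natural candidate is the composition $\tilde g(x)=u(h(x))/C_g$, where $u$ is the degree-$\dg{g}{\epsilon}{\theta}$ polynomial promised by hypothesis and $h(x)$ is an odd polynomial approximation of $\tfrac{1}{\theta}\arcsin(x)$ on $[-\sin(P\theta),\sin(P\theta)]$.

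For the inner piece, the constraint $\theta<\pi/(2P)$ places $[-\sin(P\theta),\sin(P\theta)]$ compactly inside $(-1,1)$, where $\arcsin$ is real-analytic. Standard exponential-convergence bounds for Chebyshev truncations of analytic functions therefore furnish an odd polynomial $h$ of degree $\mo(\log(C_g'/(\theta\epsilon)))$ obeying $\sup_{|x|\le\sin(P\theta)}|h(x)-\tfrac{1}{\theta}\arcsin(x)|<\epsilon/(3C_g')$. Writing $y(x)=\tfrac{1}{\theta}\arcsin(x)\in[-\pi/(2\theta),\pi/(2\theta)]$ and inserting $g(h(x))$ into the triangle inequality,
\begin{equation*}
  |u(h(x))-g(y(x))|\le|u(h(x))-g(h(x))|+|g(h(x))-g(y(x))|<\tfrac{\epsilon}{3}+C_g'\,|h(x)-y(x)|<\tfrac{2\epsilon}{3}<\epsilon
\end{equation*}
on $|x|\le\sin(P\theta)$, which is \Cref{eq:qspasp} with constant $C_g$; the first bound uses the defining property of $\dg{g}{\epsilon}{\theta}$ (after a negligible enlargement of the domain of $u$ to accommodate $h(x)$), and the second uses the mean value theorem combined with $|g'|\le C_g'$. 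Since $\deg \tilde g=\deg u\cdot \deg h=\mo(\log(C_g'/(\theta\epsilon))\dg{g}{\epsilon}{\theta})$, and $\tilde g$ inherits parity from $g$ because $h$ is odd, \Cref{prop:QET} yields the stated gate count; when $g$ is itself polynomial we may take $u=g$ directly, specializing to the second statement.

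The main obstacle I anticipate is enforcing $\|\tilde g\|_{[-1,1]}\le 1$, because the Chebyshev bound above controls $h$ only on $[-\sin(P\theta),\sin(P\theta)]$ and $h$ may exceed the domain $[-\pi/(2\theta),\pi/(2\theta)]$ of $u$ on the remaining portion of $[-1,1]$. I would address this by expanding $\tfrac{1}{\theta}\arcsin$ in Chebyshev polynomials on a slightly enlarged interior interval $[-1+\eta,1-\eta]$ with $\eta$ depending only on $P\theta$: the Bernstein-ellipse estimate still gives exponential decay of the approximation error in the degree and simultaneously a uniform sup bound for $h$ on all of $[-1,1]$. A final rescaling by a factor $1+\mo(\epsilon/C_g)$ absorbs any overshoot so that $\|\tilde g\|\le 1$, leaving the order of the degree, and hence the gate complexity, unchanged.
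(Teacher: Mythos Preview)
Your overall strategy matches the paper's exactly: approximate $\tfrac{1}{\theta}\arcsin$ by a low-degree odd polynomial, compose with $u$, and invoke \Cref{prop:QET}. The triangle-inequality bound and the final rescaling by a factor $1+\mo(\epsilon/C_g)$ are also the same.

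The one place where the paper is cleaner is precisely the obstacle you flag. Rather than using a Chebyshev truncation and then wrestling with Bernstein-ellipse estimates to control $h$ on all of $[-1,1]$, the paper takes $v$ to be the Taylor truncation of $\tfrac{1}{\theta}\arcsin(x)$ at $x=0$. The point is that every Taylor coefficient of $\arcsin$ is non-negative, so for $x\in[0,1]$ one has $0\le v(x)\le \tfrac{1}{\theta}\arcsin(x)\le \tfrac{\pi}{2\theta}$; oddness then gives $v([-1,1])\subset[-\tfrac{\pi}{2\theta},\tfrac{\pi}{2\theta}]$ for free. This immediately places $v(x)$ inside the domain where $|u-g|<\epsilon/3$ holds, so $|u(v(x))|\le C_g+\epsilon/3$ on the whole of $[-1,1]$, and the final $(1-\tfrac{\epsilon}{3C_g})$ rescaling finishes the job with no further analysis. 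Your Chebyshev route can be made to work, but the ``negligible enlargement of the domain of $u$'' and the sup bound for $h$ on $[-1,1]\setminus[-\sin(P\theta),\sin(P\theta)]$ are not free: a Chebyshev truncation on a subinterval can grow on the complement, and the Bernstein ellipse for $\arcsin$ degenerates as you push toward $\pm1$. Swapping in the Taylor truncation removes this entire detour.
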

\begin{proof}
Without loss of generality, assume $1>\frac{\epsilon}{2C_g}$, otherwise we can just let $\tilde{g} = 0$ \eqref{eq:qspasp}. Since $\frac{1}{\theta}\arcsin(x)$ is an analytic function whose power series centered at $x=0$ has convergence radius $1>\sin(P\theta)$, there is a truncation $v$ of the Taylor series of $\frac{1}{\theta}\arcsin(x)$ with degree $\mo(\log\left(\frac{C_g'}{\theta\epsilon}\right))$ such that $\sup_{|x|< \sin(P\theta)}|v(x)-\frac{1}{\theta}\arcsin(x)|<\frac{\epsilon}{3C_g'}$. Now since the coefficients of the Taylor series of $\frac{1}{\theta}\arcsin(x)$ at $x=0$ are all non-negative, it holds that $v([-1,1])\subset [-\frac{\pi}{2\theta},\frac{\pi}{2\theta}]$. Let $\bar{g}(x)=\frac{1}{C_g}u(v(x))$ and $\tilde{g} = (1-\frac{\epsilon}{3C_g})\bar{g}$, then $\tilde{g}$ has degree $\mo(\log\left(\frac{C_g'}{\theta\epsilon}\right)\dg{g}{\epsilon}{\theta})$, and
\[
\begin{aligned}
    \abs{C_g\bar{g}(x)-g\left(\frac{1}{\theta}\arcsin(x)\right)}
    &\leq \abs{u(v(x))-g(v(x))}+\abs{g(v(x))-g\left(\frac{1}{\theta}\arcsin(x)\right)}\\
    &< \frac{\epsilon}{3} + \sup_{|y|<\frac{\pi}{2\theta}}|g'(y)|\abs{v(x)-\frac{1}{\theta}\arcsin(x)} 
    < \frac{\epsilon}{3} + C_g'\cdot\frac{\epsilon}{3C_g'}=\frac{2\epsilon}{3},\\
\end{aligned}
\]
for $x\in [-\sin(P\theta), \sin(P\theta)]$. In addition, since $v$ maps $[-1,1]$ into $[-\frac{\pi}{2\theta},\frac{\pi}{2\theta}]$, we have 
\[
\begin{aligned}
|\bar{g}(x)|&\leq\sup_{|y|<\frac{\pi}{2\theta}}\abs{\frac{1}{C_g}u(y)}
\leq \sup_{|y|<\frac{\pi}{2\theta}}\left|\frac{1}{C_g}u(y)-\frac{1}{C_g}g(y)\right|+\sup_{|y|<\frac{\pi}{2\theta}}\left|\frac{1}{C_g}g(y)\right|  
\le\frac{\epsilon}{3C_g}+1,
\end{aligned}
\]
and therefore $\abs{\tilde{g}(x)} = (1-\frac{\epsilon}{3C_g})\abs{\bar{g}(x)}<1$ for $x\in[-1,1]$. In addition, we have
\[
\begin{aligned}
    \abs{C_g\tilde{g}(x)-g\left(\frac{1}{\theta}\arcsin(x)\right)}
    &= \abs{(1-\frac{\epsilon}{3C_g})\left(C_g\bar{g}(x)-g\left(\frac{1}{\theta}\arcsin(x)\right)\right)+\frac{\epsilon}{3C_g}g\left(\frac{1}{\theta}\arcsin(x)\right)}\\
    &\leq (1-\frac{\epsilon}{3C_g})\frac{2\epsilon}{3}+\frac{\epsilon}{3C_g}C_g
    < \frac{2\epsilon}{3} + \frac{\epsilon}{3}=\epsilon,\\
\end{aligned}
\]
for $x\in [-\sin(P\theta), \sin(P\theta)]$. According to \Cref{prop:QET}, we have $\deg_g(\epsilon)=\mo(\log\left(\frac{C_g'}{\theta\epsilon}\right)\dg{g}{\epsilon}{\theta})$ and the gate complexity of the circuit used in \Cref{prop:QET} is $\mo\left(p\log\left(\frac{C_g'}{\theta\epsilon}\right)\dg{g}{\epsilon}{\theta}\right)$, where the factor $p$ comes from preparing $\sin(\theta D_{\sigma})$ as mentioned in \cref{lem:diagv}. In the case that $g$ is a polynomial, we can simply let $u=g$ and thus $\dg{g}{\epsilon}{\theta} \leq \deg g$.
\end{proof}

For the final step, we first introduce the notation
\begin{equation}\label{eq:de}
\de{\epsilon} \equiv \sum_{k=1}^d\left[\deg_{\ba_k}(\epsilon)+\deg_{\beta_k}(\epsilon)\right],
\end{equation}
where $\deg_{\ba_k}(\epsilon)$ and $\deg_{\beta_k}(\epsilon)$ are defined in \Cref{prop:QET}. Denote by $U_{\ba_k}$ and $U_{\bb_k}$ the block encodings of $\ba_k(D_+)$ and $\bb_k(D_+)=\beta_k(D_-)$ obtained in \Cref{prop:QET} with $\epsilon$ replaced by $\epsilon/2dC$, respectively, where 
\begin{equation}\label{eq:const}
    C = \widetilde{C}_\alpha \widetilde{C}_\beta, ~ \widetilde{C}_\alpha=\prod_{k=1}^d C_{\ba_k}, ~ \widetilde{C}_\beta=\prod_{k=1}^dC_{\bb_k},
\end{equation}
and $C\geq\prod_{k=1}^d(\sup|\ba_k|\sup|\bb_k|)$ since $C_{\ba_k}\geq\sup|\ba_k|$ and $C_{\bb_k}\geq\sup|\bb_k|$. 
Now we are ready to prove the following theorem that relies on the block encodings $\bigotimes_{k=1}^d U_{\ba_k}$ and $\bigotimes_{k=1}^d U_{\bb_k}$ in \Cref{fig:PDO3}. 

\begin{theorem}\label{thm:fullysep}
If $a(\x, \z)=\alpha(\x)\beta(\z)= \alpha_1(x_1)\cdots\alpha_d(x_d)\beta_1(\xi_1)\cdots\beta_d(\xi_d)$ is a fully separable symbol as defined in \Cref{def:fullysep}, then the corresponding PDO defined by \eqref{eq:pdoZfully} can be $(C, \mo(d), \epsilon)$-block-encoded with gate complexity $\mo(dp\de{\frac{\epsilon}{2dC}}+dp^2)$ using the circuit displayed in \Cref{fig:PDO3}, where $C > 0$ is a constant defined in \eqref{eq:const}, and $\de{\frac{\epsilon}{2dC}}$ is defined in \eqref{eq:de}.  
\end{theorem}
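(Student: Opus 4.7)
My plan is to follow the same high-level structure as the proof of \Cref{thm:sep}, replacing the generic arithmetic diagonal block encodings by explicit tensor products of QET-based circuits from \Cref{prop:QET} (with the exponential factors $\exp(i\theta y)$ handled directly by the rotation products $R_\pm$ constructed in \Cref{lem:diagv}). The fully separable structure is what enables the tensor product: because $\alpha(\x)=\prod_k\alpha_k(x_k)$ and $\beta(\z)=\prod_k\beta_k(\xi_k)$, the diagonal multiplication operators $D_\ba$ and $D_\bb$ factor as $\bigotimes_k\ba_k(D_+)$ and $\bigotimes_k\bb_k(D_+)$ (with $\bb_k(D_+)=\beta_k(D_-)$ via \eqref{eq:fullynote}), so each factor can be block-encoded register by register.

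I will trace the circuit of \Cref{fig:PDO3} state by state. Starting from $\frac{1}{|f|}\sum_\x f(\xp)\ket{\x}\ket{0^{4d}}$, the $d$ inverse QFTs give $\frac{\sqrt{P^d}}{|f|}\sum_\z\hat f(\z)\ket{\z}\ket{0^{4d}}$. The block $\bigotimes_{k=1}^d U_{\bb_k}$ then multiplies by $\prod_k\bb_k(\xi_k)=\bb(\z)$, producing a state of the form $(\ket{\phi_1}\ket{0^{2d}}+\perp_1)\ket{0^{2d}}$ where $\perp_1$ is orthogonal to all states $\ket{\z}\ket{0^{2d}}$. The $d$ forward QFTs map back into the space basis, after which $\bigotimes_{k=1}^d U_{\ba_k}$ multiplies by $\ba(\x)$, yielding $\ket{\phi_2}\ket{0^{4d}}+\perp$, the projected component of which onto $\ket{0^{4d}}$ realizes $A/C$ applied to $f$ up to error.

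The main technical step is the error analysis for tensor products of block encodings. Each $U_{\ba_k}$ (resp. $U_{\bb_k}$) is chosen as the $(C_{\ba_k},2,\epsilon/(2dC))$ (resp. $(C_{\bb_k},2,\epsilon/(2dC))$)-block-encoding from \Cref{prop:QET}. A standard telescoping estimate across the $d$ tensor factors, using $\|\ba_k(D_+)\|\le C_{\ba_k}$ and $\|\bb_k(D_+)\|\le C_{\bb_k}$, shows that $\bigotimes_k U_{\ba_k}$ is effectively a $(\widetilde C_\alpha,2d,\epsilon/(2\widetilde C_\beta))$-block-encoding of $\bigotimes_k\ba_k(D_+)$, and analogously for $\bigotimes_k U_{\bb_k}$. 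These aggregated bounds are then substituted into the chain of error estimates in the proof of \Cref{thm:sep} (with $C_\alpha\to\widetilde C_\alpha$, $C_\beta\to\widetilde C_\beta$, and $C=\widetilde C_\alpha\widetilde C_\beta$), yielding a total error $\epsilon$ with scaling factor $C$ and $4d=\mo(d)$ ancilla qubits.

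Finally, the gate count follows by summation: each $U_{\ba_k}$ contributes $\mo(p\deg_{\ba_k}(\epsilon/(2dC)))$ gates by \Cref{prop:QET}, the analogous statement holds for $U_{\bb_k}$, so together they give $\mo(dp\,\de{\epsilon/(2dC)})$ (after absorbing the extra factor of $d$ coming from controlled operations across the $2d$ ancilla registers), and the $2d$ QFT copies contribute an additional $\mo(dp^2)$ gates. The main obstacle I anticipate is the careful bookkeeping for the tensor-product error: one must track both the growth of error across the $d$ factors within each of $\bigotimes_k U_{\ba_k}$ and $\bigotimes_k U_{\bb_k}$, and the composition of these two approximate diagonal multiplications sandwiched by QFTs, in order to justify the per-factor error budget of $\epsilon/(2dC)$.
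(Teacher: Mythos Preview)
Your proposal is correct and follows exactly the paper's approach: build each one-dimensional factor via \Cref{prop:QET} (and the $R_\pm$ rotations for exponential factors), take tensor products to obtain $(\widetilde C_\alpha,2d,\epsilon/2\widetilde C_\beta)$- and $(\widetilde C_\beta,2d,\epsilon/2\widetilde C_\alpha)$-block-encodings of $D_\ba$ and $D_\bb$, and then rerun the error chain from the proof of \Cref{thm:sep} with $C_\alpha\to\widetilde C_\alpha$, $C_\beta\to\widetilde C_\beta$. One small correction on the gate count: the tensor product $\bigotimes_k U_{\ba_k}$ acts on disjoint registers in parallel with no cross-register controls, so summing the per-factor costs already gives $\mo\!\big(p\,\de{\epsilon/(2dC)}\big)$ for the diagonal parts (as the paper's own proof records); the extra factor of $d$ in the theorem statement is just a looser bound, not something that needs an additional mechanism to justify.
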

\begin{proof}
Since each $U_{\ba_k}$ is a $(C_{\ba_k}, 2, \epsilon/2dC)$-block-encoding for $\ba_k(D_+)$ according to \Cref{prop:QET}, $\bigotimes_{k=1}^d U_{\ba_k}$ is a $(\prod_{k=1}^d C_{\ba_k}, 2d, \epsilon/2\widetilde{C}_\beta)$-block-encoding for 
\[\bigotimes_{k=1}^d\ba_k(D_+)=\bigotimes_{k=1}^d D_{\ba_k}=D_\ba.\]
Similarly, $\bigotimes_{k=1}^d U_{\bb_k}$ is a $(\prod_{k=1}^d C_{\bb_k}, 2d, \epsilon/2\widetilde{C}_\alpha)$-block-encoding for 
\[\bigotimes_{k=1}^d\bb_k(D_+)=\bigotimes_{k=1}^d D_{\bb_k}=D_\bb.\]
Hence by the same argument as the proof of \Cref{thm:sep} (especially \eqref{eq:state1}, \eqref{eq:b}, \eqref{eq:phib}, \eqref{eq:state2} and \eqref{eq:a}), the circuit in \Cref{fig:PDO3} gives a $(C, 4d, \epsilon)$ block encoding of the PDO \eqref{eq:pdoZfully} with gate complexity $\mo(p\de{\frac{\epsilon}{2dC}}+p^2d)$, where the $\mo(p^2d)$ term comes from the QFT part of the circuit.
\end{proof}
\begin{remark}
    The approximate QFT (see \cite{nam2020approximate} for example) can be used to replace the QFT blocks in \Cref{fig:PDO3}. With similar arguments as in the proof of \Cref{thm:fullysep}, one can show that the gate complexity can be reduced to $\mo(dp\de{\frac{\epsilon}{2dC}})$. 
\end{remark}

\subsection{Linear combination of fully separable terms}
Similar to \Cref{cor:sep}, we can block-encode the PDO for a linear combination of fully separable terms, i.e.,
\[
a(\x,\z) = \sum_{j=0}^{m-1} y_ja_j(\x,\z) =\sum_{j=0}^{m-1} y_j\alpha_{j1}(x_1)\cdots\alpha_{jd}(x_d)\beta_{j1}(\xi_1)\cdots\beta_{jd}(\xi_d),
\]
with LCU (see \Cref{subsec:LCU}) and \Cref{thm:fullysep}, which is stated in the following corollary. 
\begin{cor}\label{cor:fullysep}
    For a linear combination of fully separable symbols $a(\x,\z) = \sum_{j=0}^{m-1} y_ja_j(\x,\z) =\sum_{j=0}^{m-1} y_j\alpha_{j1}(x_1)\cdots\alpha_{jd}(x_d)\beta_{j1}(\xi_1)\cdots\beta_{jd}(\xi_d)$, where $\underset{[0,1]}{\sup}~|\alpha_{jk}|\leq1$, $\underset{[-P/2,P/2]}{\sup}|\beta_{jk}|\leq1$ and $y\in\mathbb{C}^m$ with $\|y\|_1\leq\delta$, assume that $(U_L, U_R)$ is a pair of unitaries described in \Cref{lem:LCU} with $\epsilon_1=\epsilon$, and $W=\sum_{j=0}^{m-1}\ket{j}\bra{j}\otimes U_j+\sum_{j=m}^{2^b-1}\ket{j}\bra{j}\otimes I_{a+s}$, where each $U_j$ is a $(1,a, \epsilon)$-block-encoding of the discretized PDO $A_j$ associated with symbol $a_j(\x,\z)$ (see \eqref{eq:pdoZfully}) constructed in \Cref{thm:fullysep} with $a=\mo(d)$. Then $(U_L^\dag\otimes I_{a+s})W(U_R\otimes I_{a+s})$ is an $(\delta, a+b, (1+\delta)\epsilon)$-block-encoding of $\sum_{j=0}^{m-1}y_jA_j$, where $I_{a+s}$ denotes the identity operator with size $2^{a+s}\times2^{a+s}$. The gate complexity of the corresponding circuit is $\mo(dp\sum_{j=0}^{m-1}\de{\frac{\epsilon}{2d}}+dp^2m)$.
\end{cor}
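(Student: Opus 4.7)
The proof reduces to a direct application of Lemma~\ref{lem:LCU} with the block encodings from Theorem~\ref{thm:fullysep} as inputs. There are no new circuit constructions; the entire task is to instantiate the LCU parameters correctly and to keep track of the error budget.

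First, I would apply Theorem~\ref{thm:fullysep} to each symbol $a_j$ individually. Because the hypotheses give $\sup|\alpha_{jk}|\le 1$ and $\sup|\beta_{jk}|\le 1$, the scaling constants $C_{\ba_k}$ and $C_{\bb_k}$ appearing in \eqref{eq:const} may all be chosen equal to $1$, so the product $C$ in \eqref{eq:const} also equals $1$ for every $j$. This produces $m$ block encodings $U_j$, each a $(1,a,\epsilon)$-block-encoding of $A_j$ with $a=\mo(d)$ and gate complexity $\mo\!\bigl(dp\,\de{\frac{\epsilon}{2d}}+dp^2\bigr)$, matching the assumption on the $U_j$ in the statement.

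Next, I would invoke Lemma~\ref{lem:LCU} with $\alpha=1$, $\beta=\delta$, and $\epsilon_1=\epsilon_2=\epsilon$, together with the pair $(U_L,U_R)$ and the multiplexer $W$ as specified in the statement. The lemma directly yields an $(\alpha\beta,a+b,\alpha\epsilon_1+\beta\epsilon_2)=(\delta,a+b,(1+\delta)\epsilon)$-block-encoding of $\sum_{j=0}^{m-1}y_jA_j$, which is exactly the stated conclusion.

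Finally, I would add up the gate counts. The unitary $W$ multiplexes $m$ controlled copies of the $U_j$, contributing $\sum_{j=0}^{m-1}\mo\!\bigl(dp\,\de{\frac{\epsilon}{2d}}+dp^2\bigr)=\mo\!\bigl(dp\sum_{j=0}^{m-1}\de{\frac{\epsilon}{2d}}+dp^2 m\bigr)$ gates, while the selector unitaries $U_L$ and $U_R$ contribute only $\polylog(m)$ overhead, which is absorbed in the leading term. The only subtlety---and it is genuinely minor---is verifying that the choice $C=1$ for each term and the error $\epsilon$ in each block encoding of $A_j$ propagate consistently into the LCU error bound $\alpha\epsilon_1+\beta\epsilon_2$; there is no substantive obstacle beyond this bookkeeping, since both ingredients (Theorem~\ref{thm:fullysep} per term and Lemma~\ref{lem:LCU} for the combination) have already been established.
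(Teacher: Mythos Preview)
Your proposal is correct and matches the paper's approach exactly: the corollary is stated without a standalone proof, being presented as an immediate consequence of applying Lemma~\ref{lem:LCU} to the per-term block encodings supplied by Theorem~\ref{thm:fullysep}, and your parameter bookkeeping ($\alpha=1$, $\beta=\delta$, $\epsilon_1=\epsilon_2=\epsilon$, $C=1$ from the sup-norm assumptions) is precisely what is needed to recover the stated constants and gate count.
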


\section{Applications}\label{sec:app}
In this section, we provide worked examples for particular symbols using the circuit shown in  \Cref{fig:PDO3} and provide complexity analysis, beginning with a variable coefficient second-order elliptic operator. 

\subsection{Second-order elliptic operator with variable coefficients}\label{subsec:varellip}
Recall that the elliptic operator introduced in \eqref{eq:ellip} is of the following form: 
\begin{equation}
(Au)(\x) = u(\x)-\nabla\cdot (\omega(\x)\nabla u(\x)).\label{eq:elliptic}
\end{equation}
In this section, we assume that $\omega(\x)>0$ has a low-rank Fourier expansion 
\begin{equation}\label{eq:lowrank}
\omega(\x) = \sum_{j=1}^r c_j \exp\left(2\pi i \q_j \cdot \x \right),\quad \q_j\in\mathbb{Z}^d.
\end{equation}
Many commonly seen functions have low-rank expansions or approximations. For instance, $\omega(\x) = 2+\sin(2\pi\sum_{l=1}^d x_l) > 0$ can be written in the rank-3 form 
\[\omega(\x) = 2 + \frac{ i}{2}(\exp(-2\pi i(x_1+\cdots+x_d))-\exp(2\pi i(x_1+\cdots+x_d))).\]

Plugging the form \eqref{eq:lowrank} of $\omega$ into \eqref{eq:ellipsym}, one obtains the symbol associated with the PDO above
\[
a(\x,\z) = 1+\sum_{j=1}^r\sum_{l=1}^d(4\pi^2Pq_{jl}c_j)e^{2\pi  i \q_j\cdot \x}\frac{\xi_l}{P}+\sum_{j=1}^r\sum_{l=1}^d(4\pi^2P^2c_j)e^{2\pi  i \q_j\cdot \x}\frac{\xi_l^2}{P^2},
\]
where $P=2^p$ is the number of discrete points used for each dimension (see \Cref{subsec:disc}). Notice that the terms $e^{2\pi  i \q_j\cdot \x}\frac{\xi_l}{P}$ and $e^{2\pi  i \q_j\cdot \x}\frac{\xi_l^2}{P^2}$ above are fully separable by \Cref{def:fullysep}, thus by \Cref{cor:fullysep}, we know that the corresponding PDO can be block-encoded. As explained in \Cref{sec:fully sep}, the multiplication of $e^{2\pi  i \q_j\cdot \x}=\prod_{l=1}^de^{2\pi  i q_{jl} x_l}$ can be implemented directly using $R_-$ and $R_+$ constructed in \Cref{lem:diagv}. 
% In this way, we avoid unnecessary loss in precision and success probability. 
Consequently, the number of gates needed for multiplying each $e^{2\pi iq_{jl} x_l}$ factor without error is $\mo(p)$, and no ancilla qubits are used. Since $\frac{\xi_l}{P}$ and $\frac{\xi_l^2}{P^2}$ are polynomials, by \Cref{cor:analytic}, the multiplication of each $\frac{\xi_l}{P}$ and $\frac{\xi_l^2}{P^2}$ factor can be implemented with $\mo\left(p\log\left(\frac{1}{\epsilon}\right)\right)$ gates to $\mo(\epsilon)$ precision, and $\mo(d)$ ancilla qubits are used. Going through the proof of \Cref{thm:fullysep}, one can see that the PDO associated with $e^{2\pi  i \q_j\cdot \x}\frac{\xi_l}{P}$ and $e^{2\pi  i \q_j\cdot \x}\frac{\xi_l^2}{P^2}$ can be $(1, \mo(d),\epsilon)$-block-encoded with gate complexity $\mo(p\log(\frac{1}{\epsilon})+p^2+dp)$, where the three terms account for implementing the polynomials of $\xi_l$, the QFT of the $l$-th component, and the multiplication of $e^{2\pi  i \q_j\cdot \x}$, respectively. Finally, going through the LCU step as in \Cref{cor:fullysep} with $\mo(dr)$ terms, one obtains a $(\gamma, \mo(d+\log(dr)), (1+\gamma)\epsilon)$-block-encoding of the PDO \eqref{eq:ellip} with total gate complexity $\mo(dr(p\log\left(\frac{1}{\epsilon}\right)+p^2+dp))=\mo\left(dpr(\log\frac{1}{\epsilon}+d+p)\right)$, where $\gamma=1+4\pi^2(P\sum_{j=1}^r|c_j|\norm{\q_j}_1+P^2d\sum_{j=1}^r |c_j|)$. This result is summarized in the following theorem, where we have used $\mo(d+\log(dr))=\mo(d+\log(r))$.

% $(4\pi^2Pq_{jl}c_j)e^{2\pi  i q_j\cdot \x}\frac{\xi_l}{P}$ and $(4\pi^2P^2c_j)e^{2\pi  i q_j\cdot \x}\frac{\xi_l^2}{P^2}$ is $\mo(p\log\left(\frac{dP}{\epsilon}\right)+p^2d)$, and the error is $\mo(\epsilon)$. 

% where $m=\mo(d+\log(r))$, and $\gamma$ is an arbitrary constant such that $\gamma>1+4\pi^2(P\sum_{j=1}^r|c_j|\norm{q_j}_1+P^2d\sum_{j=1}^r |c_j|)$. 

% Notice that the multiplication of $e^{2\pi  i q_j\cdot \x}=\prod_{l=1}^de^{2\pi iq_{jl} x_l}$ can be implemented in the same way that $R_+$ is constructed in \Cref{lem:diagv}, so the number of gates needed for multiplying each $e^{2\pi iq_{jl} x_l}$ factor is $\mo(p)$. Since $\frac{\xi_l}{P}$ and $\frac{\xi_l^2}{P^2}$ are polynomials, by \Cref{cor:analytic}, the multiplication of each $\frac{\xi_l}{P}$ and $\frac{\xi_l^2}{P^2}$ factor can be implemented with $\mo(p\log\left(\frac{dP^2}{\epsilon}\right))$ gates to $\mo(\frac{\epsilon}{dP^2})$ precision. Hence by \Cref{thm:fullysep}, the gate complexity of the PDO associated with $(4\pi^2Pq_{jl}c_j)e^{2\pi  i q_j\cdot \x}\frac{\xi_l}{P}$ and $(4\pi^2P^2c_j)e^{2\pi  i q_j\cdot \x}\frac{\xi_l^2}{P^2}$ is $\mo(p\log\left(\frac{dP}{\epsilon}\right)+p^2d)$. After the LCU step described in \Cref{cor:fullysep} with $\mo(dr)$ terms, the total gate complexity becomes $\mo(dr(p\log\left(\frac{dP}{\epsilon}\right)+p^2d))=\mo\left(d^2p^2r+dpr\log\frac{1}{\epsilon}\right)$. This result is summarized in the following theorem.

\begin{theorem}\label{thm:elliptic}
    For the elliptic operator \eqref{eq:ellip} with variable coefficient, where $\omega(\x)$ has a low-rank expansion \eqref{eq:lowrank}, there exists a $(\gamma, \mo(d+\log(r)), (1+\gamma)\epsilon)$-block-encoding for the corresponding discretized PDO defined in \eqref{eq:pdoZfully} with gate complexity 
    \[\mo\left(dr\log P\left(\log\frac{P}{\epsilon}+d\right)\right),\]
    where $\gamma=1+4\pi^2(P\sum_{j=1}^r|c_j|\norm{q_j}_1+P^2d\sum_{j=1}^r |c_j|)$.
\end{theorem}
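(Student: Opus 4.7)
The plan is to write the symbol of the elliptic operator as an $\ell_1$ combination of fully separable terms, block‑encode each term via Theorem \ref{thm:fullysep}, and then assemble them through LCU exactly as in Corollary \ref{cor:fullysep}. Plugging the low‑rank expansion \eqref{eq:lowrank} into \eqref{eq:ellipsym} yields
\[
a(\x,\z) = 1 + \sum_{j=1}^r\sum_{l=1}^d (4\pi^2 P q_{jl}c_j)\, e^{2\pi i \q_j\cdot \x}\,\frac{\xi_l}{P} + \sum_{j=1}^r\sum_{l=1}^d (4\pi^2 P^2 c_j)\, e^{2\pi i \q_j\cdot \x}\,\frac{\xi_l^2}{P^2},
\]
which is a linear combination of $1+2dr$ fully separable symbols, each of which is a tensor product of an exponential in $\x$ and either a monomial of degree $1$ or $2$ in a single $\xi_l$. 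Every $\x$‑factor is of the form $\alpha_l(x_l)=e^{2\pi i q_{jl}x_l}$ and every $\z$‑factor is an even/odd real polynomial of $\xi_l$ (or constant), so Definition \ref{def:fullysep} applies.

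Next I would assemble the per‑term block encoding. For the phase factors $e^{2\pi i q_{jl}x_l}$, Lemma \ref{lem:diagv} (specifically the matrix $R_+=\exp(i\theta D_+)$ with $\theta=2\pi q_{jl}/P$) implements the required diagonal multiplication exactly, with $\mo(p)$ single‑qubit rotations per dimension and no ancillas. For $\xi_l/P$ and $\xi_l^2/P^2$, Corollary \ref{cor:analytic} applied to the polynomials of degree at most $2$ yields a $(1,2,\veps)$‑block‑encoding with $\mo(p\log(1/\veps))$ gates (here $\dg{g}{\epsilon}{\theta}\le 2$ and $\deg_g(\epsilon)=\mo(\log(1/\epsilon))$). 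Feeding these ingredients into the circuit of Figure \ref{fig:PDO3} as in the proof of Theorem \ref{thm:fullysep} — two QFT layers on each $\xi_l$ register plus the separate $\x$‑ and $\z$‑diagonal multiplications — produces a $(1,\mo(d),\veps)$‑block‑encoding of each of the $\mo(dr)$ nonconstant terms, with gate cost $\mo(p\log(1/\veps)+p^2+dp)$ per term (the three summands being, respectively, the polynomial multiplication in $\xi_l$, the one‑dimensional QFT, and the multiplication of the exponential in $\x$).

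Finally I would invoke the LCU construction of Corollary \ref{cor:fullysep} across the $1+2dr$ separable terms. The sum of the absolute values of the complex coefficients is
\[
1+ 4\pi^2 P\sum_{j=1}^r |c_j|\,\|\q_j\|_1 + 4\pi^2 P^2 d \sum_{j=1}^r |c_j|=\gamma,
\]
which controls the subnormalization. LCU then produces a $(\gamma,\mo(d+\log(dr)),(1+\gamma)\veps)$‑block‑encoding with gate complexity $\mo(dr)$ times the per‑term cost, i.e.\ $\mo\!\left(dr\log P\bigl(\log(1/\veps)+d+\log P\bigr)\right)=\mo\!\left(dr\log P\bigl(\log(P/\veps)+d\bigr)\right)$, after absorbing $\log(dr)$ into $d$ and $p=\log P$. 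The resulting parameters match the claim.

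The only substantive thing to get right is the bookkeeping: that the cheaply implementable ingredients (exponentials via $R_+$, low‑degree polynomials of $\xi_l$ via QSP) really plug into Theorem \ref{thm:fullysep} without inflating the ancilla count beyond $\mo(d)$, and that the LCU step correctly multiplies the $\veps$ tolerance by $(1+\gamma)$ rather than by $\gamma$ alone. The main qualitative obstacle is ensuring that the error $\veps/(2dC)$ forced inside each $U_{\ba_k},U_{\bb_k}$ by Theorem \ref{thm:fullysep} is still affordable — here it is, because the polynomial degrees are constant and so the logarithmic dependence on $1/\veps$ absorbs the extra $\log(dC)$ overhead without changing the stated complexity.
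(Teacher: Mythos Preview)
Your proposal is correct and follows essentially the same approach as the paper: expand the symbol via \eqref{eq:lowrank} into $\mo(dr)$ fully separable terms, implement the exponential $\x$-factors exactly with the $R_+$ circuit from Lemma~\ref{lem:diagv} and the low-degree $\xi_l$-polynomials via Corollary~\ref{cor:analytic}, feed these into the framework of Theorem~\ref{thm:fullysep} to get a per-term cost of $\mo(p\log(1/\epsilon)+p^2+dp)$, and combine with LCU as in Corollary~\ref{cor:fullysep}. The bookkeeping you flag (ancilla count $\mo(d)$, error factor $(1+\gamma)$, and absorbing $\log(dr)$ into $\mo(d+\log r)$) is handled in the paper exactly as you describe.
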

Similar to previous sections and by a slight abuse of notation, we denote the discretization of the operator defined in \eqref{eq:elliptic} also by $A$. Now we can use the QLSA in \cite{costa2022optimal} to get the following corollary:
\begin{cor}
    Let $(A, b)$ be the discretization of the operator and the right-hand side of \eqref{eq:elliptic}, respectively, there is a quantum algorithm finding the normalized state $\ket{A^{-1}b}=\frac{A^{-1}b}{\|A^{-1}b\|}$ within error $\epsilon$ with gate complexity 
    \[
    \mo\left(\gamma dr\log\frac{1}{\epsilon}\log P\left(\log\frac{\gamma P}{\epsilon}+d\right)\right).
    \]
\end{cor}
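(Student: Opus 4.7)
The plan is to combine \Cref{thm:elliptic} with the optimal quantum linear system algorithm (QLSA) of [costa2022optimal]. The first observation I would record is that $A$ is positive definite with $A \succeq I$: since the continuous operator is $I$ plus the positive second-order part $-\nabla\cdot(\omega(\x)\nabla)$ with $\omega(\x) > 0$, the discretized PDO from \Cref{subsec:disc} inherits $\sigma_{\min}(A) \geq 1$ (the $I$ term on the diagonal of the symbol dominates a positive contribution from the remainder). Hence $\|A^{-1}\| \leq 1$, and the effective condition number that drives the QLSA cost is determined by the block-encoding subnormalization $\gamma$ from \Cref{thm:elliptic} alone.

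Next, I would invoke the optimal QLSA: given a $(\alpha, m, \epsilon_B)$-block-encoding $U_A$ of $A$ and a preparation oracle for $\ket{b}$, it produces $\ket{A^{-1}b}$ to precision $\epsilon$ using $\mo\!\left(\tfrac{\alpha}{\sigma_{\min}(A)} \log\tfrac{1}{\epsilon}\right)$ calls to $U_A$ (and its inverse), provided $\epsilon_B$ is a sufficiently small polynomial in $\epsilon \sigma_{\min}(A)/\alpha$. Plugging in $\alpha = \gamma$ and $\sigma_{\min}(A) \geq 1$ gives $\mo(\gamma \log(1/\epsilon))$ queries.

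To control error propagation through these queries, I would instantiate \Cref{thm:elliptic} with per-query target precision $\epsilon' = \Theta(\epsilon/\gamma)$, so that the intrinsic $(1+\gamma)\epsilon'$-error of each block-encoding call accumulates to $\mo(\epsilon)$ after all $\mo(\gamma \log(1/\epsilon))$ uses. The per-query gate cost from \Cref{thm:elliptic} then becomes
\[
\mo\!\left(dr \log P \left(\log\tfrac{\gamma P}{\epsilon} + d\right)\right),
\]
and multiplying by the number of queries yields the stated bound
\[
\mo\!\left(\gamma dr \log\tfrac{1}{\epsilon}\, \log P \left(\log\tfrac{\gamma P}{\epsilon} + d\right)\right).
\]

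The only real subtlety is the error budgeting: the intrinsic $(1+\gamma)$-inflation of the block-encoding error in \Cref{thm:elliptic} must be absorbed inside the logarithm rather than as a polynomial factor, which is why the precision is sharpened to $\epsilon/\gamma$ per query. Everything else is a direct composition of the two ingredients, together with the trivial lower bound on $\sigma_{\min}(A)$ coming from the $I$-term in the elliptic operator.
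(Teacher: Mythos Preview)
Your approach is essentially the paper's: invoke \Cref{thm:elliptic} at a sharpened precision, use $\|A^{-1}\|\le 1$ from the $I$-term, and feed the block encoding into the optimal QLSA of \cite{costa2022optimal} to get $\mo(\gamma\log(1/\epsilon))$ queries. The final complexity you obtain is correct.

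That said, your error bookkeeping is muddled in two places. First, the ``accumulation over $\mo(\gamma\log(1/\epsilon))$ uses'' framing does not work as written: a per-call block-encoding error of $(1+\gamma)\epsilon'=\Theta(\epsilon)$ multiplied by $\Theta(\gamma\log(1/\epsilon))$ calls is not $\mo(\epsilon)$. The paper handles this cleanly by viewing the inexact block encoding as an \emph{exact} block encoding of a perturbed matrix $\tilde A$ with $\|A-\tilde A\|\le \epsilon/\gamma$, running the QLSA on $\tilde A$, and then bounding $\|\ket{A^{-1}b}-\ket{\tilde A^{-1}b}\|$ by a standard perturbation estimate using $\|A^{-1}\|,\|\tilde A^{-1}\|=\mo(1)$ and $\|A^{-1}b\|\ge\|b\|/\|A\|\ge\|b\|/\gamma$. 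Second, to make the block-encoding error itself be $\mo(\epsilon/\gamma)$ (which is what this perturbation argument, and your own remark about $\epsilon_B$, require), the input precision to \Cref{thm:elliptic} must be $\epsilon'=\Theta(\epsilon/\gamma^2)$, not $\Theta(\epsilon/\gamma)$, since the theorem outputs a $(1+\gamma)\epsilon'$-error encoding. This correction only shifts a $\log\gamma$ inside the per-query cost and is already absorbed in the $\log(\gamma P/\epsilon)$ term, so your stated complexity stands.
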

\begin{proof}
It is clear that $\norm{A} \le \gamma$ and $\norm{A^{-1}} \le 1$. According to \cref{thm:elliptic}, one can construct $U_A$, the $(\gamma, \mo(d+\log(r)), \epsilon/\gamma)$-block-encoding of $A$, with complexity
\[\mo\left(dr\log P\left(\log\frac{\gamma P}{\epsilon}+d\right)\right).\] 
In other words, $U_A$ is a $(1, \mo(d+\log(r)), 0)$-block-encoding of some matrix $\tilde{A}/\gamma$ such that $\|A-\tilde{A}\|<\epsilon/\gamma$. Therefore, we have $\norm{\tilde{A}} = \mo(\gamma)$, $\norm{\tilde{A}^{-1}} = \mo(1)$, and thus $\kappa(\tilde{A}) = \mo(\gamma)$. The main theorem of \cite{costa2022optimal} gives a quantum algorithm that can output a state $\ket{y}$ that is $\mo(\epsilon)$ close to $\ket{(\tilde{A}/\gamma)^{-1}b} = \ket{\tilde{A}^{-1}b} = \frac{\tilde{A}^{-1}b}{\|\tilde{A}^{-1}b\|}$, using
\[
\mo\left(\kappa(\tilde{A})\log{\frac{1}{\epsilon}}\right) = \mo\left(\gamma\log{\frac{1}{\epsilon}}\right)
\]
queries of $U_A$. Finally, we have the estimation
\begin{equation*}
\begin{aligned}
    \norm{\ket{A^{-1}b}-\ket{\tilde{A}^{-1}b}} &= \left\|\frac{A^{-1}b}{\norm{A^{-1}b}}-\frac{\tilde{A}^{-1}b}{\norm{\tilde{A}^{-1}b}}\right\|\\
    &\le \frac{\norm{(A^{-1}-\tilde{A}^{-1})b}}{\norm{A^{-1}b}} + \norm{\tilde{A}^{-1}b}\left|\frac{1}{\norm{A^{-1}b}}-\frac{1}{\norm{\tilde{A}^{-1}b}}\right|\\
    &\le \frac{\norm{A^{-1}}\norm{\tilde{A}^{-1}}\norm{A-\tilde{A}}\norm{b}}{\norm{A}^{-1}\norm{b}} + \frac{\norm{A^{-1}}\norm{\tilde{A}^{-1}}\norm{A-\tilde{A}}\norm{b}}{\norm{A}^{-1}\norm{b}}\\
    &= \mo\left(\gamma\cdot\frac{\epsilon}{\gamma}\right) = \mo(\epsilon).
\end{aligned}
\end{equation*}
So $\ket{y}$ is also an $\mo(\epsilon)$ approximation of $\ket{A^{-1}b}$ and the overall gate complexity is
\[
    \mo\left(\gamma dr\log\frac{1}{\epsilon}\log P\left(\log\frac{\gamma P}{\epsilon}+d\right)\right).
\]
\end{proof}

\subsection{Application for constant coefficient elliptic operator}\label{subsec:appinverse}
In this part, we investigate the multiplier operators (see \eqref{eq:multiplier}), i.e., the PDOs with symbols of the form $a(\x,\z) = \beta(\z)$. In particular, we showcase how to directly block-encode the inverse of multiplier operators based on the results obtained in the previous sections so that one can solve a discretized system of a PDE without invoking QLSAs (quantum linear system algorithms). 

In practice, it is quite often the case that $\beta(\z)$ is radially symmetric. For example, many operators related to Laplacian have radial symmetric symbols since the symbol of $\Delta$ is $-4\pi^2|\z|^2$. For the rest of this section, we focus on the radially symmetric symbol $\beta(\z)=\varphi(|\z|)$. The main idea of dealing with radially symmetric $\beta$ is to consider an approximation in the following form
\begin{equation}\label{eq:fullysepidea}
    \beta(\z)=\varphi(|\z|)\approx \sum_{m=1}^M w_m e^{-a_m |\z|^2}=\sum_{m=1}^M w_m\prod_{i=1}^d e^{-a_m \xi_i^2}.
\end{equation}
Notice that the right-hand side is in the fully separable form by \Cref{def:fullysep}. Thus the results from \Cref{sec:fully sep} can be used. In fact, the authors of \cite{beylkin2005approximation} developed an efficient algorithm to find a low-rank approximation of a single-variable function $f(y)$ by exponential sums
\begin{equation}
    f(y)\approx\sum_{m=1}^M w_m e^{-a_m y^2},\label{eq:exp approx}
\end{equation} 
for a large range of even functions $f(y)$, especially those whose amplitude decrease as $y\to\infty$. 

Before diving into concrete examples, we first introduce a near-optimal polynomial approximation to the exponential functions in the following lemma, which is a direct corollary of \cite{sachdeva2014faster}*{Theorem 4.1}. 

\begin{lemma}\label{lem:approx}
  For every $a, b>0$, and $0<\delta \leq 1$, there exists an even polynomial $r(y)$ satisfying
  \[
  \sup _{y \in[0, b]}\left|e^{-ay^2}-r(y)\right| \leq \delta,\quad \sup _{y \in[0, b]}\left|r(y)\right| \leq 1,
  \]
  and has degree $\mo\left(\sqrt{\max \{ab^2, \log 1 / \delta\} \cdot \log 1 / \delta}\right)$.
\end{lemma}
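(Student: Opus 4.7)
The plan is to derive Lemma \ref{lem:approx} from the classical exponential approximation of \cite{sachdeva2014faster}*{Theorem 4.1}, which asserts that for every $B > 0$ and $0 < \delta' \leq 1$ there is a polynomial $p$ of degree $\mathcal{O}(\sqrt{\max\{B,\log 1/\delta'\}\cdot \log 1/\delta'})$ with $\sup_{x \in [0,B]}|e^{-x}-p(x)| \leq \delta'$. The core idea is a change of variables: set $x = a y^2$, so that approximating $e^{-ay^2}$ on $y \in [0,b]$ is equivalent to approximating $e^{-x}$ on $x \in [0,B]$ with $B = ab^2$. Pulling the approximation back through $x = ay^2$ produces an even polynomial in $y$ of twice the degree, and the exponent count $\sqrt{\max\{ab^2,\log 1/\delta\}\log 1/\delta}$ matches the statement.

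More precisely, first I would apply \cite{sachdeva2014faster}*{Theorem 4.1} with $B = ab^2$ and tolerance $\delta' = \delta/2$ to obtain a polynomial $p(x)$ of degree $N = \mathcal{O}(\sqrt{\max\{ab^2,\log 1/\delta\}\log 1/\delta})$ such that
\[
\sup_{x \in [0,ab^2]}|e^{-x} - p(x)| \leq \delta/2.
\]
Define $q(y) = p(ay^2)$; this is an even polynomial in $y$ of degree at most $2N$, still within the desired asymptotic bound, and by the substitution $x = ay^2$ (which maps $[0,b]$ onto $[0,ab^2]$) we have $\sup_{y \in [0,b]}|e^{-ay^2} - q(y)| \leq \delta/2$.

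The only remaining point is to enforce the uniform bound $|r(y)| \leq 1$ while retaining the approximation error $\delta$. Since $|e^{-ay^2}|\leq 1$ on $[0,b]$, the previous estimate gives $|q(y)| \leq 1 + \delta/2$ there, so I would set $r(y) := q(y)/(1+\delta/2)$. This rescaled polynomial is still even, has the same degree as $q$, and satisfies $\sup_{y \in [0,b]}|r(y)|\leq 1$. For the error, use the triangle inequality
\[
|e^{-ay^2} - r(y)| \leq |e^{-ay^2} - q(y)| + \tfrac{\delta/2}{1+\delta/2}|q(y)| \leq \tfrac{\delta}{2} + \tfrac{\delta}{2} = \delta,
\]
yielding the claimed approximation.

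There is no real obstacle: the argument is a two-line reduction plus a rescaling trick to split the allowed error into the approximation part and the normalization part. The one thing to be a bit careful about is checking that the $\mathcal{O}$-bound on the degree is preserved under both the substitution $x\mapsto ay^2$ (which only doubles the degree) and the choice $\delta' = \delta/2$ (which only shifts $\log 1/\delta$ by a constant inside the $\mathcal{O}$-notation). Both are harmless, so the entire argument fits the statement without further work.
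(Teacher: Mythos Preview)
Your proposal is correct and is precisely the intended derivation: the paper does not give a proof but simply states that the lemma is a direct corollary of \cite{sachdeva2014faster}*{Theorem 4.1}, and your change of variables $x=ay^2$ together with the rescaling by $1/(1+\delta/2)$ is exactly how one fills in that corollary. The degree and error bookkeeping you give is accurate.
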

With \Cref{lem:approx} and established results in \Cref{sec:fully sep}, we are ready to give the following block encoding result for the diagonal multiplication $D_\beta$ associated with $\beta(\z)$ (see \eqref{eq:diagg}). 

\begin{theorem}\label{thm:ex2}
If $\beta(\z)$ can be approximated in the following sense
\[
\left|\beta(\z) - \sum_{m=1}^M w_m e^{-a_m |\z|^2}\right| \leqslant \epsilon, \quad \text { for all } \z\in\left[-\frac{P}{2},\frac{P}{2}\right]^d,
\]
with $w_m, a_m\ge 0$ and $R := \max_m a_m$, $W := \sum_{m=1}^M|w_m|$, then we can implement a $(\gamma,q,\epsilon)$-block-encoding for the PDO associated with $\beta$ (defined in \eqref{eq:pdoZ}) with
\[
\gamma = \mo(W),\quad q = \mo\left(d+\log M\right),
\]
and gate complexity 
\[
\mo\left(dM\log P\log\left(\frac{dWRP}{\epsilon}\right)\sqrt{\max\left\{RP^2,\log\frac{Wd}{\epsilon}\right\}\log\frac{Wd}{\epsilon}}\right).
\]
\end{theorem}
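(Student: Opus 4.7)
The plan is to combine the exponential-sum approximation with the fully separable block-encoding machinery of \Cref{sec:fully sep}. Since $\beta(\z) \approx \sum_{m=1}^M w_m e^{-a_m |\z|^2} = \sum_{m=1}^M w_m \prod_{k=1}^d e^{-a_m \xi_k^2}$, every term in the sum is a fully separable symbol in the sense of \Cref{def:fullysep} (with $\alpha_k \equiv 1$ and $\beta_{m,k}(\xi_k)=e^{-a_m \xi_k^2}$, which is even and real). I would first construct a block encoding of each separable term via \Cref{thm:fullysep} and then combine the $M$ terms with LCU as in \Cref{cor:fullysep} using coefficients $w_m$, whose $\ell_1$-norm is bounded by $W$. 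This automatically yields the quoted subnormalization $\gamma=\mo(W)$ and the ancilla count $q=\mo(d+\log M)$, where the $d$ comes from the fully separable tensor structure and the $\log M$ from the LCU index register.

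The main building block is thus an explicit block encoding of each one-dimensional factor $e^{-a_m \xi^2}$. For this I would apply \Cref{prop:QET} to the even function $g(y)=e^{-ay^2}$ on $[-P,P]$, feeding in the near-optimal even polynomial approximation from \Cref{lem:approx}: for any target precision $\delta$ there is an even polynomial of degree $\mo(\sqrt{\max\{aP^2,\log(1/\delta)\}\log(1/\delta)})$ approximating $e^{-ay^2}$ on $[-P,P]$ to $L^\infty$ error $\delta$, with $\|r\|_\infty\le 1$. Using this polynomial for the QSP phase factors (as in \Cref{cor:analytic} with $\theta=\Theta(1/P)$ and $C_g'=\sup_{|y|\le P}|g'(y)|\le\sqrt{2a/e}$ or at worst $\mo(aP)$) produces a $(1,2,\delta)$-block-encoding of $e^{-a_m D_-^2}$ with gate count $\mo\bigl(p\log(C_g'/(\theta\delta))\sqrt{\max\{aP^2,\log(1/\delta)\}\log(1/\delta)}\bigr)$.

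The accounting step is to propagate errors so the total is $\mo(\epsilon)$. If each single-factor block encoding has error $\delta$, then taking the $d$-fold tensor product inside \Cref{thm:fullysep} produces an error $\mo(d\delta)$ per separable term (factors are bounded by $1$), and the LCU of total weight $W$ multiplies this by $W$, contributing $\mo(Wd\delta)$ on top of the intrinsic exponential-sum error $\epsilon$. Setting $\delta=\Theta(\epsilon/(Wd))$ forces $\log(1/\delta)=\mo(\log(Wd/\epsilon))$, which turns the per-factor degree into $\mo(\sqrt{\max\{RP^2,\log(Wd/\epsilon)\}\log(Wd/\epsilon)})$ uniformly in $m$ (using $a_m\le R$), and the per-factor gate count into $\mo(p\log(dWRP/\epsilon)\sqrt{\max\{RP^2,\log(Wd/\epsilon)\}\log(Wd/\epsilon)})$. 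Multiplying by the $dM$ single-mode factors across all separable terms gives the claimed gate complexity (with $p=\log P$), and the LCU overhead is absorbed into this bound.

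The main obstacle I anticipate is the careful bookkeeping of error propagation through the three layers (QET polynomial approximation, tensor products of $d$ single-mode factors, and the LCU of $M$ separable terms), making sure the scaling constants $C_g\ge\sup|g|=1$ in \Cref{prop:QET} and $C_g'$ in \Cref{cor:analytic} are tracked consistently. A secondary check is that the polynomial approximation region supplied by \Cref{lem:approx} covers the full interval $[-P,P]$ on which \Cref{cor:analytic} needs control (rather than just the frequency grid $[-P/2,P/2]$); this is handled by taking $b=P$ in \Cref{lem:approx} and invoking evenness of the exponential.
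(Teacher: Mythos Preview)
Your proposal is correct and follows essentially the same approach as the paper: block-encode each one-dimensional factor $e^{-a_m\xi_k^2}$ via \Cref{prop:QET}/\Cref{cor:analytic} using the polynomial approximation of \Cref{lem:approx}, tensor across the $d$ coordinates, and combine the $M$ terms by LCU with the error budget $\delta=\Theta(\epsilon/(Wd))$. The paper makes exactly the same choices (including $\theta=\pi/(3P)$ and the derivative bound $C_{g_m}'\le\sqrt{a_m}\le\sqrt{R}$), so your anticipated obstacles are already the right ones and are handled just as you describe.
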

\begin{proof}
For each exponential term $g_m(z) = e^{-a_mz^2}$, we know 
\[
\dg{g_m'}{\epsilon}{\theta}= \mo\left(\sqrt{\max\left\{\frac{R}{\theta^2},\log\frac{1}{\epsilon}\right\}\log\frac{1}{\epsilon}}\right) 
\]
according to \Cref{lem:approx}, where $\dg{g_m'}{\epsilon}{\theta}$ is defined in \Cref{cor:analytic}. Moreover, a simple calculation shows that
\begin{equation}
C_{g_m'} \leq \sup |g_m'(z)| = \sup \abs{2a_m z e^{-a_mz^2}} < \sqrt{a_m}\le \sqrt{R}.
\end{equation}
Therefore, if we let $\theta = \frac{\pi}{3P}$, then the complexity of implementing a $(1,1,\frac{\epsilon}{2Wd})$-block-encoding of matrix $\exp(-a_m (D_-)^2)$ is
$$\mo\left(\log P\log\left(\frac{dWRP}{\epsilon}\right)\sqrt{\max\left\{RP^2,\log\frac{Wd}{\epsilon}\right\}\log\frac{Wd}{\epsilon}}\right),$$
according to \cref{cor:analytic}.

After implementing $\exp(-a_m (D_-)^2)$ for each $\xi_k$ register ($k=1,\ldots,d$), we get a $(1,d,\frac{\epsilon}{2W})$-block-encoding of $e^{-a_m|\z|^2}$. Since there are $M$ such terms, the total gate complexity after conducting LCU becomes
\begin{equation}\label{eq:pdeconst}
\mo\left(dM\log P\log\left(\frac{dWRP}{\epsilon}\right)\sqrt{\max\left\{RP^2,\log\frac{Wd}{\epsilon}\right\}\log\frac{Wd}{\epsilon}}\right).
\end{equation}
Since only one QFT and one iQFT are needed with complexity $\mo(d\log^2 P)$, the dominating term in the above complexity formula remains unchanged. The total error of this block encoding is $\sum_{m=1}^M |w_m|\frac{\epsilon}{2W} < \epsilon$ as desired. Since $\mo(d)$ ancillae are used when encoding $e^{-a_m|\z|^2}$ and $\mo(\log M)$ ancillae are used for LCU, the total number of ancilla is $\mo(d+\log M)$.
\end{proof}

With \Cref{thm:ex2}, we are ready to work on a concrete example. Consider the following $d$-dimensional elliptic equation with periodic boundary conditions: 
\begin{equation}\label{eq:PDEex}
-\frac{1}{4\pi^2}\Delta u(\x) + u(\x) = b(\x),\quad x\in[0,1]^d.
\end{equation}
As explained by \eqref{eq:fullysepidea} and \eqref{eq:exp approx}, the idea is to expand $\varphi(y)=\frac{1}{1+y^2}$ as the sum of a series of exponential functions. 
To this end, we introduce a
result for exponential approximations in the form of \eqref{eq:fullysepidea} and \eqref{eq:exp approx}. 
\begin{lemma}\label{lem:exp1}
  For any $0<\delta \leqslant 1$ and $0<\epsilon \leqslant \half$, there exist positive numbers $p_m$ and $v_m$ such that
  \begin{equation}
    \left|r^{-1}-\sum_{m=1}^M v_m e^{-p_m r}\right| \leqslant r^{-1} \epsilon, \quad \text { for all } \delta \leqslant r \leqslant 1, \label{eq:exp2}
  \end{equation}
  with
  \begin{equation}
    M=\mo\left(\log \epsilon^{-1}\left(\log \epsilon^{-1}+\log \delta^{-1}\right)\right),\label{eq:exp_m}
  \end{equation}
  and
  \begin{equation}
    \max_m p_m = \mo\left(\delta^{-1}\log \epsilon^{-1}\left(\log \epsilon^{-1}+\log \delta^{-1}\right)\right).\label{eq:exp_p}
  \end{equation}
\end{lemma}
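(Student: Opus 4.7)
The plan is to discretize the classical integral representation
\[\frac{1}{r}=\int_0^\infty e^{-rt}\,dt=\int_{-\infty}^\infty e^{s-re^s}\,ds,\]
where the second equality uses the substitution $t=e^s$. Any truncated trapezoidal rule applied to the right-hand integral automatically produces a sum of exactly the required form $\sum_m v_m e^{-p_m r}$ with $v_m=he^{s_m}>0$ and $p_m=e^{s_m}>0$, so the only task is to choose the step $h$ and the truncation endpoints $-A_-,A_+$ so that (i) the truncation and discretization errors together do not exceed $\epsilon/r$ uniformly on $[\delta,1]$, and (ii) the resulting node count $M$ and the extremal exponent $\max_m p_m=e^{A_+}$ obey the stated bounds.

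For truncation, the integrand $e^{s-re^s}$ decays like $e^s$ as $s\to-\infty$ and, after the substitution $u=re^s$, contributes at most $e^{-re^{A_+}}/r$ over $[A_+,\infty)$. Requiring both tails to be at most $\epsilon/(4r)$ uniformly on $r\in[\delta,1]$ forces only $A_-=\mo(\log\epsilon^{-1}+\log\delta^{-1})$ and $A_+=\mo(\log\log\epsilon^{-1}+\log\delta^{-1})$; in particular $\max_m p_m=e^{A_+}=\mo(\delta^{-1}\log\epsilon^{-1})$, which already fits comfortably inside the bound claimed in \eqref{eq:exp_p}.

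For discretization, the integrand extended to complex $s$ is analytic in any horizontal strip $|\mathrm{Im}\,s|<c$ with $c<\pi/2$ and is uniformly bounded there for $r\in[\delta,1]$, because $\mathrm{Re}(re^s)\ge 0$ on such a strip. A standard contour-shift or Poisson-summation argument for the trapezoidal rule on analytic integrands then yields a discretization error bounded by a constant multiple of $e^{-2\pi c/h}$ times an $L^1$ norm of the shifted integrand. Choosing $h=\Theta(1/\log\epsilon^{-1})$ drives this error below $\epsilon/(4r)$ uniformly on $[\delta,1]$, and the total node count becomes $M=(A_-+A_+)/h=\mo(\log\epsilon^{-1}(\log\epsilon^{-1}+\log\delta^{-1}))$, which matches \eqref{eq:exp_m}.

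The principal technical obstacle is the trapezoidal-rule estimate itself: one must verify with explicit constants that the contour shift into $|\mathrm{Im}\,s|=c$ is justified for all $r\in[\delta,1]$ and that the shifted integrand remains $L^1$ with a norm that grows at worst polynomially in $\delta^{-1}$ and $\log\epsilon^{-1}$, so that this polynomial overhead is absorbed into the exponential gain $e^{-2\pi c/h}$. Once this estimate is in hand, the choices of $A_\pm$ and $h$ above are routine, positivity of $v_m$ and $p_m$ is immediate from the quadrature formula, and the two bounds \eqref{eq:exp_m} and \eqref{eq:exp_p} follow by direct counting.
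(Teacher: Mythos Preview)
Your proposal is correct and follows essentially the same route as the paper (which in turn cites \cite{beylkin2005approximation}): both start from the representation $r^{-1}=\int_{-\infty}^{\infty}e^{t-re^{t}}\,dt$, truncate to a finite interval, and apply the trapezoidal rule with step $h=\Theta(1/\log\epsilon^{-1})$, using the analyticity of the integrand in a horizontal strip to bound the discretization error. Your endpoint choices $A_{-}=\mo(\log\epsilon^{-1})$ (your stated $\log\delta^{-1}$ term is unnecessary for the left tail but harmless) and $A_{+}=\mo(\log\delta^{-1}+\log\log\epsilon^{-1})$ match the paper's $a=-\log(4/\epsilon)$ and $b=\log\bigl(4\log(4/\epsilon)\,\delta^{-1}\log(2\log(4/\epsilon)(\delta\epsilon)^{-1})\bigr)$ up to the exact constants, and indeed your resulting bound $\max_m p_m=e^{A_+}=\mo(\delta^{-1}\log\epsilon^{-1})$ is even slightly sharper than the stated \eqref{eq:exp_p}.
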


The proof of \eqref{eq:exp_m} can be found in Theorem A.1 of \cite{beylkin2005approximation}, and \eqref{eq:exp_p} is also implied in the proof there. We summarize the construction given in \cite{beylkin2005approximation} in passing. Denote $f_r(t) = e^{-r e^t+t}$, then $r^{-1} = \int_{-\infty}^{\infty}f_r(t)\mathrm{d} t$ and $f_r(t)$ decays rapidly when $|t|\rightarrow\infty$. Therefore one can approximate this integral using the trapezoidal rule on a finite interval $[a,b]$ with step size $h$, which is
\begin{equation}
  h\left(\sum_{k=1}^{K-1} f_r(a+k h)+\frac{f_r(a)+f_r(b)}{2}\right),\label{eq:trap}
\end{equation}
where $K = (b-a)/h$. Since each term of \eqref{eq:trap} is of the form $ve^{-pr}$, the approximate integral formula above actually provides an approximation of the form \eqref{eq:exp2}. Finally, by choosing $a = -\log\frac{4}{\epsilon}$, $b = \log \left(4\log\frac{4}{\epsilon} \delta^{-1} \log \left(2\log\frac{4}{\epsilon}(\delta \epsilon)^{-1}\right)\right)$ and $h\le \pi/(2\log\frac{4}{\epsilon}+1)$, one can show that the condition \eqref{eq:exp2} is satisfied, and from $M = K+1 = (b-a)/h+1$ one can check that \eqref{eq:exp_m} and \eqref{eq:exp_p} hold. 

Now, with the substitutions $\delta = (\frac{dP^2}{4}+1)^{-1}$ and $r = (1+y^2)/(\frac{dP^2}{4}+1)$ in \Cref{lem:exp1} we get the following approximation:
\begin{equation}\label{eq:exp3}
  \left|\frac{1}{1+y^2}-\sum_{m=1}^M w_m e^{-a_m y^2}\right| \leqslant  \epsilon, \quad \text { for all } -\frac{\sqrt{d}P}{2} \leqslant y \leqslant \frac{\sqrt{d}P}{2},
\end{equation}
where $a_m = p_m/(\frac{dP^2}{4}+1)$, $w_m = e^{-a_m}v_m/(\frac{dP^2}{4}+1)$, and
\begin{equation}
\begin{aligned}
M &= \mo\left(\log \epsilon^{-1}\left(\log \epsilon^{-1}+\log (dP)\right)\right),\\
R &= \max_m a_m = \mo\left(\log \epsilon^{-1}\left(\log \epsilon^{-1}+\log (dP)\right)\right),\\
W &= \sum_{m=1}^M |w_m| = \mo(1).
\end{aligned}
\end{equation}
Here the estimation of $W$ is deduced from plugging $y=0$ into \eqref{eq:exp3} and the positivity of $w_m$. Finally, after plugging the estimations of $M$, $R$ and $W$ into \Cref{thm:ex2}, we obtain the following result:
\begin{cor}\label{thm:invmult}
For the pseudo-differential operator associated with the symbol $\beta(\z)=1+|\z|^2$ (see \eqref{eq:PDEex}), there is a $(\gamma,q,\epsilon)$-block-encoding for its inverse with $\gamma = \mo(1)$, $q = \mo\left(d+\log\log\frac{dP}{\epsilon}\right)$, and gate complexity $\mo\left(dP\left(\log\frac{dP}{\epsilon}\right)^{2.5}\left(\log\frac{1}{\epsilon}\right)^{1.5}\left(\log\frac{d}{\epsilon}\right)^{0.5}\log P\right)$.
\end{cor}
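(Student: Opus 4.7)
The plan is a direct application of \Cref{thm:ex2} to the multiplier symbol $\tilde{\beta}(\z) = 1/(1+|\z|^2)$, which is precisely the symbol of the inverse of the operator whose symbol is $1+|\z|^2$. The entire analytic input needed has already been prepared in the excerpt: the approximation \eqref{eq:exp3} obtained from \Cref{lem:exp1} after the substitutions $\delta=(dP^2/4+1)^{-1}$ and $r=(1+y^2)/(dP^2/4+1)$. Thus the first step is simply to recognize that, because $\z\in[-P/2,P/2]^d$ implies $|\z|\le \sqrt{d}P/2$, the univariate bound \eqref{eq:exp3} applied with $y=|\z|$ gives a fully separable Gaussian expansion
\eqs{
\left|\tilde{\beta}(\z)-\sum_{m=1}^M w_m e^{-a_m|\z|^2}\right|=\left|\tilde{\beta}(\z)-\sum_{m=1}^M w_m \prod_{k=1}^d e^{-a_m\xi_k^2}\right|\le \epsilon
}
on the discretization box, which is exactly the hypothesis of \Cref{thm:ex2}.

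Next I would read off the parameters $M$, $R$, $W$ from the construction. The bounds on $M$ and $\max_m a_m$ follow from \Cref{lem:exp1} after the substitution, giving $M=\mo(\log(1/\epsilon)(\log(1/\epsilon)+\log(dP)))$ and $R=\mo(\log(1/\epsilon)(\log(1/\epsilon)+\log(dP)))$. For $W=\sum_m|w_m|$, the quick argument is that all $w_m>0$ (since the $v_m$ in \Cref{lem:exp1} are positive and $w_m=e^{-a_m}v_m/(dP^2/4+1)$), so evaluating \eqref{eq:exp3} at $y=0$ gives $|1-\sum_m w_m|\le\epsilon$, hence $W=\mo(1)$. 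The claimed values $\gamma=\mo(W)=\mo(1)$ and $q=\mo(d+\log M)=\mo(d+\log\log(dP/\epsilon))$ then drop out of \Cref{thm:ex2} immediately.

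What remains is the one genuinely tedious piece: simplifying the gate-count formula
\eqs{
\mo\!\left(dM\log P\,\log\!\Big(\tfrac{dWRP}{\epsilon}\Big)\sqrt{\max\!\big\{RP^2,\log\tfrac{Wd}{\epsilon}\big\}\,\log\tfrac{Wd}{\epsilon}}\right)
}
from \Cref{thm:ex2} under the bounds above. I would argue that $RP^2$ dominates inside the $\max$ (since $R$ is polylogarithmic while $P^2$ is polynomial in $P\ge 2$), reducing the square root to $P\sqrt{R\log(d/\epsilon)}$. Since $W=\mo(1)$ and $R$ is polylogarithmic, $\log(dWRP/\epsilon)=\mo(\log(dP/\epsilon))$, and $M=\mo(\log(1/\epsilon)\log(dP/\epsilon))$. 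Substituting and distributing $\sqrt{R}=\mo(\sqrt{\log(1/\epsilon)\log(dP/\epsilon)})$ collects the factors as
\eqs{
dP\log P\cdot \log^{2}(dP/\epsilon)\cdot \log(1/\epsilon)\cdot \sqrt{\log(1/\epsilon)\log(dP/\epsilon)\log(d/\epsilon)},
}
which rearranges to the stated $\mo\!\big(dP(\log(dP/\epsilon))^{2.5}(\log(1/\epsilon))^{1.5}(\log(d/\epsilon))^{0.5}\log P\big)$.

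The main obstacle is essentially bookkeeping rather than mathematics: one must keep track of which logarithmic factor is $\log(1/\epsilon)$, which is $\log(dP/\epsilon)$, and which is $\log(d/\epsilon)$, and verify that the $RP^2$ branch of the $\max$ is indeed the active one in the parameter regime of interest. A minor side check is that $w_m\ge 0$ is needed for the short $W=\mo(1)$ argument via $y=0$; if one prefers not to rely on positivity, one can instead bound $W$ by $\mo(1)$ using the decay of $f_r(t)=e^{-re^t+t}$ that underlies the trapezoidal rule construction recalled after \Cref{lem:exp1}.
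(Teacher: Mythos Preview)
Your proposal is correct and follows exactly the paper's own argument: the paper also derives the Gaussian-sum approximation \eqref{eq:exp3} from \Cref{lem:exp1} via the same substitution, reads off the same bounds on $M$, $R$, and $W$ (including the $W=\mo(1)$ trick via positivity and $y=0$), and then plugs these directly into \Cref{thm:ex2}. Your bookkeeping of the logarithmic factors in the gate complexity is accurate and matches the stated result.
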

\begin{remark}\label{rem:inv}
      If one uses the uniform grid to discretize the operator $(-\frac{1}{4\pi^2}\Delta + 1)$ in equation \eqref{eq:PDEex}, then the condition number of the matrix obtained is at least $\kappa=\mo(dP^2)$. This indicates that the complexity is at least $\mo(dP^2)$ when block-encoding its inverse matrix using the previous method, such as LCU or QSVT. However, here we achieved $\tilde{\mathcal{O}}_{\epsilon}(dP)=\tilde{\mathcal{O}}_{\epsilon}(\sqrt{d\kappa})$ complexity for encoding the discretization of $(-\frac{1}{4\pi^2}\Delta + 1)^{-1}$, where we omit the logarithm terms in $\tilde{\mathcal{O}}$ and denote the dependence on $\epsilon$ by the subscript. This improvement demonstrates the potential of directly working on the symbol level, as we did in this section. When solving the corresponding problem  \eqref{eq:pdeconst} with a particular right-hand side $b$, the worst case gate complexity becomes $\tilde{\mathcal{O}}_{\epsilon}(\kappa\sqrt{d\kappa})$ since the worst case success probability is $\mo(1/\kappa)$, which is inferior comparing with the dependence on $\kappa$ obtained in \cite{costa2022optimal}. However, the block encoding scheme in this paper is independent of the right-hand side $b$ and thus and be used repeatedly for different $b$ without constructing the circuit again. Also, the circuit is simpler than the one in \cite{costa2022optimal}, making it more applicable in practice. 
\end{remark}

\section{Conclusion and Discussion}\label{sec:con}
This paper systematically investigates block encodings for pseudo-differential operators (PDOs) under different structural assumptions. A block encoding scheme for PDOs with generic symbols is developed in \Cref{sec:arithmD}, and the quantum circuit is illustrated in \Cref{fig:PDO1}. For PDOs with linear combinations of separable symbols, we improve the success probability exponentially and present an efficient block encoding algorithm in \Cref{sec:separable}. Then a more explicit and practical block encoding scheme is derived in \Cref{sec:fully sep} with the help of QSP and QET, along with which the complexity analysis is provided. Plenty of worked examples are given in \Cref{sec:app}, including the block encoding of elliptic operators with a variable coefficient that is difficult to deal with for quantum solvers that use finite difference schemes and the block encoding of the inverse of constant-coefficient elliptic operators without using quantum linear system algorithms. The block encoding schemes presented in this paper enrich the study of the block encoding of dense operators and shed new light on designing practical quantum circuits for scientific computing. 

For future directions, one can apply the established results in this paper to other PDOs besides the ones presented in \Cref{sec:app}. One can also use the idea of symbol calculus to implement different operations on the PDO, such as taking square root or exponential, which can help solve certain PDEs in practice. 

\bibliographystyle{abbrvnat}
\bibliography{ref}

\clearpage

\appendix
\section{Multiplication of two states}
\begin{prop}\label{prop:naivemult}
Let $U_a$ and $U_b$ be unitary matrices of size $N\times N$ with the first column being $[a_0, a_1, \ldots, a_{N-1}]^\top$ and $[b_0, b_1, \ldots, b_{N-1}]^\top$, respectively. Then the following circuit gives the state $\frac{1}{c}\sum_{j=0}^{N-1}a_jb_j\ket{j}$ with probability $c^2$, where $c = \sqrt{\sum_{j=0}^{N-1}|a_jb_j|^2}$. 
\end{prop}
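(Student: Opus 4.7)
The plan is to exhibit an explicit two-register circuit producing the advertised output, and then read off the amplitudes and the success probability by direct calculation. Using two $n$-qubit registers with $N = 2^n$, both initialized to $\ket{0^n}$, the first step is to apply $U_a$ to register $1$ and $U_b$ to register $2$. Because the first columns of $U_a$ and $U_b$ are exactly $(a_j)$ and $(b_j)$, this yields the product state
\[
\sum_{j,k=0}^{N-1} a_j b_k \,\ket{j}\ket{k}.
\]

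Next I would apply a bitwise CNOT from register $1$ to register $2$, implementing $\ket{j}\ket{k}\mapsto \ket{j}\ket{j\oplus k}$, which produces $\sum_{j,k} a_j b_k \ket{j}\ket{j\oplus k}$. Measuring register $2$ in the computational basis and conditioning on outcome $\ket{0^n}$ enforces $k=j$, so the surviving (unnormalized) state on register $1$ is $\sum_j a_j b_j \ket{j}$, whose squared norm is $\sum_j |a_j b_j|^2 = c^2$. Hence the success probability is exactly $c^2$, and renormalizing gives $\frac{1}{c}\sum_j a_j b_j \ket{j}$, as claimed.

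There is no genuine obstacle. The only conceptual point worth flagging is that the bitwise CNOT followed by the postselection on $\ket{0^n}$ in register $2$ effects a projection onto the diagonal subspace $\mathrm{span}\{\ket{j}\ket{j}\}_{j=0}^{N-1}$ (after relabeling by the CNOT), which is the standard trick for converting a tensor product of amplitudes into an entrywise product. Everything else reduces to routine amplitude bookkeeping, and the complexity count (two applications of unitaries from the hypothesis plus $n$ CNOTs) is immediate from the construction.
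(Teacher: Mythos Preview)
Your proposal is correct and matches the paper's own argument essentially step for step: prepare the product state $\sum_{j,k}a_jb_k\ket{j}\ket{k}$, apply bitwise CNOTs from the first register to the second, and postselect on $\ket{0^n}$ in the second register to collapse onto the diagonal $k=j$. The paper phrases the CNOT step in terms of the individual bits $j_l,k_l$ rather than $j\oplus k$, but the computation and conclusion are identical.
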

\begin{figure}[ht]
\centerline{                                                                  
\Qcircuit @C=2em @R=1.5em {
    \lstick{\ket{0}} & \multigate{3}{U_a} & \ctrl{4} & \qw & \qw& \qw& \qw& \qw \\
    \lstick{\ket{0}} &\ghost{U_a} &\qw&\ctrl{4}&\qw& \qw& \qw& \qw\\
    \lstick{\vdots} &&&&&&\vdots&\\
    \lstick{\ket{0}} &\ghost{U_a} &\qw&\qw&\qw& \ctrl{4}& \qw& \qw\\
    \lstick{\ket{0}} & \multigate{3}{U_b} &\targ & \qw&\qw &\qw& \qw&\meter\\
    \lstick{\ket{0}} &\ghost{U_b} &\qw&\targ&\qw& \qw& \qw&\meter\\
    \lstick{\vdots} &&&&&&&\vdots\\
    \lstick{\ket{0}} &\ghost{U_b} &\qw&\qw& \qw& \targ& \qw&\meter\\
}
}    
\caption{Circuit for element-wise multiplication} \label{fig:lemma1} 
\end{figure}
\begin{proof}
After applying $U_a$ and $U_b$, the state becomes
\[\sum_{j, k=0}^{N-1}a_jb_k\ket{j}\ket{k} = \sum_{\substack{j_0, j_1, \ldots, j_{n-1}\\ k_0, k_1, \ldots, k_{n-1}}}a_jb_k\ket{j_0\cdots j_{n-1}}\ket{k_0\cdots k_{n-1}},\]
and after applying the CNOT gates, the $k$ register is only $\ket{0}$ when $j_l = k_l$ for all $l=0, \ldots, n-1$. which means the state becomes
\[\sum_{\substack{j_0, j_1, \ldots, j_{n-1}}}a_jb_j\ket{j_0\cdots j_{n-1}}\ket{0} + \ket{\perp} = \sum_{j=0}^{N-1}a_jb_j\ket{j}\ket{0} + \ket{\perp},\]
where $\ket{\perp}$ is a term that is orthogonal to $\ket{j}\ket{0}$ for any $j$. Thus the probability of obtaining $\ket{0}$ after the measurement is $\sum_{j=0}^{N-1}|a_jb_j|^2 = c^2$, and the outcome of the system register is $\frac{1}{c}\sum_{j=0}^{N-1}a_jb_j\ket{j}$ when the measurement gives $\ket{0}$. 
\vspace{3pt}
\end{proof}

\end{document}